\newcommand{\samuni}{\stackrel{\$}\gets}
\newcommand{\plpl}{\!+\hspace{-3pt}+}
\newcommand{\enscheme}{\mathrm{EN}}
\newcommand{\phyidealfunc}{\mathbb{R}}
\newcommand{\phyrecfields}{\mathit{A}}
\newcommand{\phyerrfunc}{\mathit{e}}
\newcommand{\phyusermeas}{\mathit{u}}
\newcommand{\phymeasret}{\phyusermeas^{\phyerrfunc}_{\phyrecfields}}
\newcommand{\phyrecinf}{\mathsf{INFECTED}}
\newcommand{\timeidealfunc}{\mathbb{T}}
\newcommand{\enidealfunc}{\funct{\enscheme}}
\newcommand{\enriskfunc}{\rho}
\newcommand{\enerrorfuncset}{\mathit{E}}
\newcommand{\enfakingfuncset}{\Phi}
\newcommand{\enleakfunc}{\mathcal{L}}
\newcommand{\ennoisyrec}{\widetilde{\mathrm{R}}_\mathit{\epsilon}}
\newcommand{\ennewnoisyrec}{\widetilde{\mathrm{R}}^*}
\newcommand{\ensharexprec}{\mathbf{SE}}
\newcommand{\enuser}{\mathit{U}}
\newcommand{\enusers}{\overline{\mathbf{U}}}
\newcommand{\enuserscorr}{\widetilde{\mathbf{U}}}
\newcommand{\enerrmsg}{\mathsf{error}}
\newcommand{\tbbidealfunc}{\mathcal{F}_{\mathsf{TBB}}}
\newcommand{\tbbrec}{\mathcal{C}}
\newcommand{\fesr}{\mathrm{FESR}}
\newcommand{\incubt}{\mathcal{T}}
\newcommand{\sigscheme}{\Sigma}
\newcommand{\sigg}{\mathsf{Gen}}
\newcommand{\sigs}{\mathsf{Sign}}
\newcommand{\sigv}{\mathsf{Vrfy}}
\newcommand{\sigpk}{\mathsf{vk}_\sigscheme}
\newcommand{\sigsk}{\mathsf{sk}_\sigscheme}
\newcommand{\irsig}[1]{\sigma_{\mathsf{#1}}}
\newcommand{\pke}{\mathsf{PKE}}
\newcommand{\pkeg}{\mathsf{PGen}}
\newcommand{\pkee}{\mathsf{Enc}}
\newcommand{\pked}{\mathsf{Dec}}
\newcommand{\pkepk}{\mathsf{pk}}
\newcommand{\pkesk}{\mathsf{sk}}
\newcommand{\pkemsg}{\mathrm{m}}
\newcommand{\pkerc}{\mathrm{r}}
\newcommand{\pkect}{\ct}
\newcommand{\nizkscheme}{\mathsf{N}}
\newcommand{\Prv}{\mathcal{P}}
\newcommand{\Ver}{\mathcal{V}}
\newcommand{\ncrs}{\mathsf{crs}}
\newcommand{\Extr}{\mathcal{E}}
\newcommand{\trap}{\tau}
\newcommand{\SC}{\mathcal{SC}}
\newcommand{\CRS}{\mathcal{CRS}}
\newcommand{\REP}{\mathcal{REP}}
\newcommand{\Gatt}{G_{\mathsf{att}}}
\newcommand{\gattmpk}{\mathsf{vk}}
\newcommand{\eid}[1]{\mathsf{eid}_{\mathsf{#1}}}
\newcommand{\prog}[1]{\mathsf{prog}_{\mathsf{#1}}}
\newcommand{\KME}{\mathsf{KME}}
\newcommand{\DE}{\mathsf{DE}}
\newcommand{\FE}{\mathsf{FE}}
\newcommand{\resume}{\mathsf{resume}}
\newcommand{\install}{\mathsf{install}}
\newcommand{\inp}{\mathsf{input}}
\newcommand{\outp}{\mathsf{output}}
\newcommand{\getpk}{\mathsf{getPK()}}
\newcommand{\initkw}{\mathrm{init}}
\newcommand{\insetupkw}{\mathrm{init}\textrm{-}\mathrm{setup}}
\newcommand{\cosetupkw}{\mathrm{complete}\textrm{-}\mathrm{setup}}
\newcommand{\compkw}{\mathrm{computed}}
\newcommand{\provisionkw}{\mathrm{provision}}
\newcommand{\runkw}{\mathrm{run}}
\newcommand{\funcKeyF}{\mathsf{sk}_{\funcF}}
\newcommand{\mem}{\mathsf{mem}}
\newcommand{\out}{\mathsf{out}}
\newcommand{\primname}{\mathrm{FESR}}
\newcommand{\PartyA}{\mathsf{A}}
\newcommand{\PartyB}{\mathsf{B}}
\newcommand{\PartyC}{\mathsf{C}}
\newcommand{\PartyAS}{\mathbf{A}}
\newcommand{\PartyBS}{\mathbf{B}}
\newcommand{\PartyCS}{\mathbf{C}}
\renewcommand{\party}{P}
\newcommand{\Store}{\textbf{store }}
\newcommand{\env}{\mathcal{Z}}
\newcommand{\sid}{\mathsf{sid}}
\newcommand{\idx}{\mathsf{idx}}
\newcommand{\pid}{\mathsf{pid}}
\newcommand{\pcode}{\mathsf{code}}
\newcommand{\fe}{\mathsf{FE}}
\newcommand{\Ffamily}{\mathtt{F}}
\newcommand{\fedomain}{\mathcal{X}}
\newcommand{\ferange}{\mathcal{Y}}
\newcommand{\ferand}{\mathcal{R}}
\newcommand{\festates}{\mathcal{S}}
\newcommand{\FfamilyDef}{\fedomain \times \festates \times \ferand \rightarrow \ferange \times \festates}
\newcommand{\fes}{\mathsf{Setup}}
\newcommand{\fekg}{\mathsf{KeyGen}}
\newcommand{\fee}{\mathsf{Enc}}
\newcommand{\fed}{\mathsf{Dec}}
\newcommand{\fein}{\mathrm{x}}
\newcommand{\feout}{\mathrm{y}}
\newcommand{\ferc}{\mathrm{r}} 
\newcommand{\funcF}{\mathrm{F}}
\newcommand{\leakfunc}{\funcF_0}
\newcommand{\festate}{\mathsf{s}}
\newcommand{\initstate}{\emptyset}
\newcommand{\handle}{\mathsf{h}}
\newcommand{\getHandle}{\texttt{getHandle}}
\newcommand{\partySet}{\mathcal{P}}
\newcommand{\preimages}{\mathcal{M}}
\newcommand{\false}{\mathsf{false}}
\newcommand{\feksg}{\mathsf{KeyShareGen}}
\newcommand{\dfesr}{DD\textrm{-}\fesr}
\newcommand{\secp}{\lambda}
\newcommand{\lenkeyshare}{k}
\newcommand{\lenkeysharef}{\lenkeyshare_\funcF}
\newcommand{\CPartyAS}{\hat{\PartyAS}}
\newcommand{\CPartyBS}{\hat{\PartyBS}}
\newcommand{\keysharelist}{\mathcal{KS}}
\newcommand{\cert}{\mathsf{CERT}}
\newcommand{\certg}{GetK}
\newcommand{\certs}{Sign}
\newcommand{\funccert}{\funct{\cert}}
\newcommand{\certout}{\mathsf{cert}}
\newcommand{\certvk}{\mathsf{VK}}
\newcommand{\Steel}{\mathsf{Steel}}
\newcommand{\dsteel}{DD\textrm{-}\Steel}
\newcommand{\mpk}{\mathsf{mpk}}
\newcommand{\msk}{\mathsf{msk}}
\newcommand{\pkd}{\mathsf{pk}_{KD}}  
\newcommand{\skd}{\mathsf{sk}_{KD}}
\newcommand{\pkf}{\mathsf{pk}_{\mathsf{FD}}}
\newcommand{\skf}{\mathsf{sk}_{\mathsf{FD}}}
\newcommand{\ct}{\mathsf{ct}}
\newcommand{\ctsk}{\mathsf{ct}_{\mathsf{key}}}
\newcommand{\ctmsg}{\mathsf{ct}_{\mathsf{msg}}}
\newcommand{\gvk}{\mathsf{vk}_{\mathsf{att}}}
\newcommand{\Flist}{\mathcal{K}}
\newcommand{\dsteelkshares}{\mathcal{KS}}
\newcommand{\aggregfunc}{\mathsf{Agg}}
\newcommand{\aggregsfunc}{\mathsf{AggS}}
\newcommand{\enplus}{\enscheme^{+}}
\newcommand{\enplusideal}{\mathcal{F}_{\enplus}}
\newcommand{\phyrecsec}{\mathsf{SEC}}
\newcommand{\enanalfuncset}{\mathit{AF}}
\newcommand{\enmuldomain}{\llbracket\mathcal{X}\rrbracket}
\newcommand{\enmulrange}{\mathcal{Y}}
\newcommand{\enanaldomain}{\enmuldomain \times \festates \times \ferand}
\newcommand{\enanalrange}{\enmulrange \times \festates}
\newcommand{\enanalfunc}{\alpha}
\newcommand{\enanalyst}{\ddot{\mathsf{A}}}
\newcommand{\enanalysts}{\overline{\mathit{A}}}
\newcommand{\enanalcorr}{\widetilde{\mathbf{\enanalyst}}}
\newcommand{\enanalstate}{\mathcal{ST}}
\newcommand{\enkeythreshold}{K}
\newcommand{\glassvault}{\mathsf{Glass}\textrm{-}\mathsf{Vault}}
\newcommand{\simglassvault}{\simul_{\mathsf{GV}}}
\newcommand{\Sim}{\mathcal{S}}
\newcommand{\simrepo}{\mathcal{H}}
\newcommand{\gattsent}{\mathcal{G}}
\newcommand{\gattsigned}{\mathcal{B}}
\newcommand{\Forward}{\textbf{forward}~}
\newcommand{\IRONsim}{\simul_{\primname}}
\newcommand{\dfesrsim}{\simul_{\dfesr}}
\newcommand{\fesrsim}{\IRONsim}
\newcommand{\dfesrsimKS}{K}
\newcommand{\dfesrsimKShares}{KS}
\newcommand{\Gprime}{\mathsf{GG}}
\newcommand{\Notify}[4]{\textbf{notify} #1 that #2 was sent from #3 to #4}
\newcommand{\NotifyAndReceive}[5]{\Notify{#1}{#2}{#3}{#4} and \textbf{capture response} #5}
\newcommand{\Await}{\textbf{await}~}
\newcommand{\ensimtbb}{\mathcal{T}}
\renewcommand{\algorithmiccomment}{\Comment}
\begin{document}
\title{Glass-Vault: A Generic Transparent Privacy-preserving Exposure Notification Analytics Platform} 
\titlerunning{Glass-Vault: An Exposure Notification Analytics Platform}
%

\author{}
\institute{}
\author{Lorenzo Martinico\inst{1}\thanks{email: lorenzo.martinico@ed.ac.uk, orcidID: 0000-0002-4968-674X}\hspace{1mm} \and \hspace{1mm}  Aydin Abadi\inst{2}\thanks{email: aydin.abadi@ucl.ac.uk, orcidID: 0000-0002-1414-8351} \and\\
Thomas Zacharias\inst{1} \thanks{email: thomas.zacharias@ed.ac.uk, 0000-0002-5022-8543}\hspace{1mm} \and \hspace{1mm} 
Thomas Win\inst{3}\thanks{email: thomas.win@uwe.ac.uk, 0000-0002-4977-0511 }}

\authorrunning{Martinico et al.}
%
\institute{University of Edinburgh \and
University College London
\and University of the West of England}
\maketitle              

%
%
%

\begin{abstract}
  The highly transmissible COVID-19 disease is a serious threat to people's
  health and life. To automate tracing those who have been in close physical
  contact with newly infected people and/or to analyse tracing-related data,  
  researchers have proposed various ad-hoc programs that require being executed
  on users' smartphones.
  Nevertheless, the existing solutions have two primary limitations: (1)
  \emph{lack of generality}: for each type of analytic task, a certain kind of data
  needs to be sent to an analyst; (2) \emph{lack of transparency}:  parties who provide data to an analyst are not necessarily infected individuals; therefore, infected individuals' data can be shared with others (e.g., the analyst) without their fine-grained and direct consent.
  In this work, we present $\glassvault$, a protocol that addresses both
  limitations simultaneously. It allows an
analyst to run authorised programs over the collected data of infectious
users, without learning the input data.
  %
  %
 $\glassvault$ relies on a new variant of generic Functional Encryption that we propose in this work. This new variant, called $\dsteel$,  offers these two additional properties: dynamic and decentralised. We illustrate the security of both $\glassvault$ and $\dsteel$ in the Universal
  Composability setting. $\glassvault$ is the first UC-secure protocol that
  allows analysing the data of Exposure Notification users in a privacy-preserving
  manner. As a sample application, we indicate how it can be used to generate ``infection heatmaps''.
\end{abstract}

\keywords{Automated Exposure Notification  \and Secure Analytics \and Functional
  Encryption \and Privacy \and Universal Composability.}

\section{Introduction}

The Coronavirus (COVID-19) pandemic has been significantly affecting
individuals' personal and professional lives as well as the global economy. The
risk of COVID-19 transmission is immensely high among people in close proximity.
Tracing those individuals who have been near recently infected people (Contact
Tracing) and notifying them of a close contact with an infectious individual
(Exposure notification) is one of the vital approaches to efficiently
diminishing the spread of COVID-19 \cite{CDC}. These practices allow to identify
and instruct only those who have potentially contracted the virus to
self-isolate, without having to require an entire community to do so. This is
crucial when combating a pandemic, as widespread isolation can have destructive
effects on people's (mental and physical) well-being and countries' economies.
Researchers have proposed numerous solutions that can be installed on users'
smartphones to improve the efficiency of data collection through automation.
Most of the proposed solutions attempt to implement privacy-preserving techniques,
such as hiding infected users' contact graphs or adopting designs that prevent
tracking of non-infected users~\cite{martin20demyscoviddigitcontactracin}, to
engender sufficient adoption throughout the
population~\cite{10.1093/jamia/ocaa240}. There have also been a few ad-hoc
privacy-preserving solutions that help \emph{analyse} other user-originated data, e.g.,
location history to map the virus clusters \cite{EPRINT:BHKRW20}, or scanning QR
codes to notify those who were co-located with infected individuals
\cite{PoPETS:LGVBSPT21}.
%
%

This kind of data analytics can play a crucial role in better understanding the
spread of the virus and ultimately helps inform governments' decision-making
(e.g., to close borders, workplaces, or schools) and enhance public health
advice, especially when the analytics result is combined with the general
public's health records (e.g., a popluation's average age and common risk
factors, or people's previous exposure to infections). Despite the importance of
this type of solutions, only a few have been proposed that can preserve users'
privacy. However, they suffer from two main limitations.

Firstly, they \emph{lack generality}, in the sense
that for each type of data analysis, a certain data type/encoding has to be sent to the
analyst, which ultimately (i) limits the applications of such solutions, (ii)
increases user-side computation, communication, and storage costs, and (iii)
demands a fresh cryptographic protocol to be designed, defined, and proven secure
for every operation type. The solutions proposed in \cite{EPRINT:BHKRW20,PoPETS:LGVBSPT21} are examples of such ad-hoc analytics protocols.
It is desirable that all kinds of user data,
independent of their format could be securely collected and transmitted
through a unified protocol.
%
%
In concrete terms,
such a unified protocol will (a) provide a generic framework for scientists and health
authorities to focus on the analysis of data without having to design ad-hoc security protocols and (b) relieve users from installing
multiple applications on their devices to concurrently run data capturing
programs, which could cause issues, especially for users with resource-constrained devices.

Secondly, existing solutions \emph{lack transparency}, meaning that users are
not in control of what kind of sensitive data is being collected about them by
their contact tracing applications.
In some of these schemes, the data does not even originate directly from the
users, but from a third-party data collector, e.g., a mobile service provider
or national health service or even security services \cite{Isreal-COVID-19}.

The importance of trustworthy access to
raw health data has been recognised by the UK's
National Health Service (NHS). It has recently established a large-scale mechanism called ``Trusted Research Environments''
(TRE) that lets health data be analysed transparently and securely, by authorised
researchers \cite{ukhealthdataresearchalliance20215767586}. However, even this advanced scheme does not give individuals the possibility to explicitly choose and
withdraw consent on their data being used as part of these analytics programs, a
right that privacy legislations across the world have increasingly begun to
recognise \cite{chassang2017impact}.


\

\noindent\textbf{Our Contributions.} To address the aforementioned limitations,
in this work, we propose a platform
called ``$\glassvault$''.
$\glassvault$ is an extension of regular privacy-preserving decentralised
contact tracing, which additionally allows infected users to share sensitive
(non-contact tracing) data for analysis. It is a \emph{generic} platform, as it
is (data) type agnostic and supports \emph{any secure computation}
that users authorise. It also offers transparency and privacy, by allowing users
to consensually choose what data to share, and forcing an analyst to execute
only those computations authorised by a sufficient number of users, without
being able to learn anything about the users' inputs beyond the result. %
 %

In this work, we formally define $\glassvault$ in the Universal
Composability (UC) paradigm and prove its security. We consider a setting where
both analysts and users are semi-honest parties who may collude with each other
to learn additional information about the data beyond what the analysts are
authorised to compute. Furthermore, we allow the adversary to dynamically corrupt
users, but assume a static set of corrupted analysts.
%
%
To have an efficient protocol that can offer all the above features, we
construct $\glassvault$ by carefully combining pre-existing Exposure
Notification algorithms with an extension of the generalised functional encryption proposed
by~\citet{PKC:BKMT21}. 
We extend their construction into a novel
protocol called $\dsteel$, which allows a functional key to be generated in a
distributed fashion while letting any authorised party freely join a key
generator committee. We believe $\dsteel$ is of independent interest. As a concrete application of
$\glassvault$, we show how it can be utilised to help the analyst identify
clusters of infections through a heatmap.


The rest of the paper has been organised as follows. Section \ref{sec::background}  presents an overview of related work and provides key concepts we rely on. Section \ref{sec::DD-FESR} presents a formal definition of the generic Functional Encryption's new variant, called $\dfesr$, along with $\dsteel$, a new protocol that realises $\dfesr$. Section \ref{sec::gv} presents $\glassvault$'s formal definition, $\glassvault$ protocol, and its security proof. Section \ref{sec::heatmap} provides a concrete example of a computation that an analyst in  $\glassvault$ can perform, i.e., infection heatmaps. Appendices \ref{app::fe} and \ref{app::func} provide more detail on the underlying key concepts, while Appendix \ref{app::heat} elaborates on how the infection heatmaps can be implemented. Appendix \ref{app::steel-sim} provides more detail about the concept used in $\dfesr$ and $\dsteel$.

\section{Background}\label{sec::background}

We now give a brief overview of existing literature regarding our security
framework (\ref{subsec::background_uc}) and the building blocks for our
constructions (\ref{subsec::background_ppct}-\ref{subsec::background_fe})

\subsection{Universal Composability}\label{subsec::background_uc}

There are various paradigms via which a cryptographic protocol might be defined
and proven, such as game-based or simulation-based paradigms. Universal
Composability (UC), introduced by \citet{FOCS:Canetti01}, is a simulation-based
model that ensures security even if multiple instances of a protocol run in
parallel.
%
%
Informally, in this paradigm, the security of a protocol is shown to hold by
comparing its execution in the real world with an ideal world execution, given a
trusted \emph{ideal functionality} that precisely captures the appropriate
security requirements. A bounded environment $\env$, which provides the parties
with inputs and schedules the execution, attempts to distinguish between the two
worlds. To show the security of the protocol, there must exist an ideal world
adversary, often called a simulator, which generates a protocol's transcript
indistinguishable from the real protocol. We say \emph{the protocol UC-realises
  the functionality}, if for every possible bounded adversary in the real world
there exists a simulator such that $\env$ cannot distinguish the transcripts of
the parties' outputs. Once a protocol is defined and proven secure in the UC
model, other protocols, that use it as a subroutine, can securely replace the
protocol by calling its functionality instead, via the UC composition theorem.

Later, \citet{TCC:CDPW07} extend the original UC framework and provide a
generalised UC (GUC) to re-establish UC's original intuitive guarantee even for
protocols that use \emph{globally} available setups, such as public-key
infrastructure (PKI) or a common reference string (CRS), where all parties and
protocols are assumed to have access to some global information which is trusted
to have certain properties. GUC formalisation aims at preventing bad
interactions even with adaptively chosen protocols that use the same setup.
\citet{TCC:BCHTZ20} proposed a new UC-based framework, UC with Global
Subroutines (UCGS), that allows the presence of global functionalities in
standard UC protocols.

\subsection{Privacy-Preserving Contact Tracing}\label{subsec::background_ppct}

In this section, we present an overview of various variants of privacy-preserving solutions for contact tracing.

\subsubsection{Centralized vs Decentralized Contact Tracing.}
Within the first few months of the COVID-19 pandemic, a large number of
theoretical (e.g.,
\cite{EPRINT:ABIV20,EPRINT:ReiBraSch20c,DBLP:journals/debu/TrieuSSSS20,canetti2020anonymous})
and practical (e.g.,
\cite{EPRINT:AISEC20,EPRINT:Gvili20,DBLP:journals/access/AhmedMXRMKSHJJ20,abbas2020covid,alsdurf20coviwhitepaper})
automated contact tracing solutions were quickly developed by governments,
industry, and academic communities. Most designs concentrated around two
architectures, so-called ``centralised'' and ``decentralised''. To put it
simply, the major difference between the two architectures rests on key
generation and exposure notification. In a centralised system, the keys are
generated by a trusted health authority and distributed to contact tracing
users. In decentralised systems, the keys are generated locally by each user. In
both types, information is exchanged in a peer-to-peer fashion, commonly through
Bluetooth Low Energy (\text{BLE}) messages broadcast by each user's phone. Once
someone is notified of an infection, they upload some data to the health
authority server. While in centralised systems the uploaded data usually
corresponds to the BLE broadcast the infected users' devices listened to, in a
decentralised system it will generally be the messages they sent. Since the
(centralised) authority knows the identity of each party in the system, it can
notify them about exposure. In this setting, the authority is able to construct
users' contact graphs, which allows it to further analyse users' movements and
interactions, at the cost of privacy. On the other hand, the decentralised users
download the list of broadcasts from exposed users and compare it with their own
local lists. This guarantees additional privacy compared to centralised systems
(although several attacks are still possible, see
\cite{martin20demyscoviddigitcontactracin}), while also preventing the health
authority from running large scale analysis on population infection which would
be possible in a centralised system. Despite this, the adoption of
decentralised systems such as DP-3T \cite{DP3T} has been more widespread due to
technical restrictions and political decisions forced by smartphone
manufacturers \cite{GAEN2020}. There has been much debate on how any effort to
the adoption of a more private and featureful contact tracing scheme could be
limited by these
gatekeepers~\cite{EPRINT:Vaudenay20b,EPRINT:ABIV20,vaudenay2020dark}.

\vspace{-3mm}
\subsubsection{Formalising Exposure Notification in the UC framework.}
\citet{EPRINT:CKLRSSTVW20} introduce a comprehensive approach to formalise the Exposure
Notification primitive via the UC framework, showing how a protocol similar to
DP-3T realises their ideal functionality. Their UC formulation is designed to
capture a wide range of Exposure Notification settings. The modelling relies on a variety of
functionalities that abstract phenomena such as physical reality events and Bluetooth
communications. 
While the above work is unique in formalising Exposure notification, a UC
formalisation of the related problem of proximity testing has been given in \cite{PROVSEC:TonDavAlv11}, based on the reduction to Private Equality Testing by
\citet{NDSS:NTLHB11}.

\subsubsection{Automated Data Analysis.}
A few attempts have been made to develop automated systems which can analyse population behaviour to better understand the spread of the virus.  
%
%
%
The solution proposed in \citet{EPRINT:BHKRW20} displays the development of virus
hotspots, as a heatmap. In this system, there are two main players; namely,
the health authority and a mobile phone provider, each of which has a set of
data that they have independently collected from their users. Their goal is to
find (only) the heatmap in a privacy-preserving manner, i.e., without revealing
their input in plaintext to their counterparty. To achieve its goal, the system
uses (a computationally expensive) homomorphic encryption, differential privacy,
and the matrix representation of inputs. Thus, in this system, (i) the two
parties run the computation on users' data, without having their fine-grained
consent and (ii) each party's input has to be encoded in a specific way, i.e.,
it must be encrypted and represented as a matrix.

%

The protocols in
\cite{PoPETS:LGVBSPT21,EPRINT:BCKKMSSY20,cryptoeprint:2020:1546} allow users to
provide their encoded data to a server for a specific analysis. Specifically,
\citet{PoPETS:LGVBSPT21} design a privacy-preserving ``Presence-Tracing'' system
which notifies people who were in the same venue as an infected individual. The
proposed solution mainly uses identity-based encryption, hash function, and
authenticated encryption to achieve its goal and encode users' input data.
 %
%
%
%
%
\citet{EPRINT:BCKKMSSY20} design a privacy-preserving scheme to anonymously
collect information about a social graph of users. In this solution, a central server can 
construct an anonymous graph of interactions between users which would let the server 
understand the progression of the virus among users. This solution is based 
on zero-knowledge proofs, digital signatures, and RSA-based accumulators. 
%
%
\citet{cryptoeprint:2020:1546} propose a privacy-preserving scheme between
multiple non-colluding servers to help epidemiologists simulate and predict
future developments of the virus. This scheme relies on heavy machineries such
as oblivious shuffling, anonymous credentials, and generic multi-party
computation.

In all of the above solutions, the parties need to encode their inputs
in a certain way to support the specific computation that is executed on their
inputs, and thus do not support generality. Additionally, not all systems allow
the users to opt-out of the computation, and are therefore not transparent.
%

\vspace{-4mm}
\subsection{Trusted Execution Environments}\label{subsec::background_tee}
Trusted Execution Environments (TEEs) are secure processing environments (a.k.a.
{secure enclaves}) that consist of processing, memory, and storage hardware
units. In these environments, the residing code and data are isolated from other
layers in the software stack including the operating system. TEEs ensure that
the integrity and confidentiality of data residing in them are preserved. They
can also offer remote attestation capabilities, which enable a party to remotely
verify that an enclave is running on a genuine TEE hardware platform. Under the
assumption that the physical CPU is not breached, enclaves are protected from an
attacker with physical access to the machine, including the memory and the
system bus. The first formal definition of TEEs notion was proposed by
\citet{EC:PasShiTra17} in the GUC model. This definition provides a basic
abstraction called the \emph{global attestation functionality} ($\Gatt$,
described in Supporting material~\ref{app::func_Gatt}), that casts the core services a wide
class of real-world attested execution processors offer. $\Gatt$ has been used
by various protocols both in the GUC model (e.g., in
\cite{EPRINT:TZLHJS16,ASIACCS:WSDLZW19}) and recently in the UCGS model (in
\cite{PKC:BKMT21}). 
TEEs have been used in a few protocols to implement Contact Tracing solutions, e.g., in
\cite{castelluccia20_desir,sturzenegger20_confid_comput_privac_preser_contac_tracin,mailthody21_safer_illin_rokwal}.
%
%

\vspace{-4mm}
\subsection{Functional Encryption}\label{subsec::background_fe}

Informally, ``Functional Encryption'' ($\fe$) is an encryption
 scheme that ensures that a party who possesses a decryption key learns nothing beyond a specific function
  of the encrypted data. Many types of encryption (e.g., identity-based or attribute-based encryptions)
  can be considered as a special case of $\fe$. The notion of $\fe$ was first defined formally by  \citet{TCC:BonSahWat11}. $\fe$ involves
  three sets of parties: $\PartyAS, \PartyBS$, and $\PartyCS$. Parties in
  $\PartyAS$ are encryptors, parties in $\PartyBS$ are decryptors, and parties
  in $\PartyCS$ are key generation authorities. The syntax of $\fe$ is recapped
  in Supporting material~\ref{app::fe}.


 %
 An $\fe$ scheme can also be
thought of as a way to implement access control to an ideal repository
\cite{EPRINT:MatMau13}.
%
Since the introduction of $\fe$, various variants of $\fe$ schemes have been
proposed, such as those that (i) support distributed ciphertexts, letting joint
functions be run on multiple inputs of different parties; (ii) support
distributed secret keys that do not require a single entity to hold a master
key; or (iii) support both properties simultaneously.
  %
  %
   %
In particular (ii) and (iii) affect the membership of sets
$\PartyAS$ and $\PartyCS$. We refer readers to \cite{abs-2106-06306,AgrawalGT21}
for surveys of $\fe$ schemes.
Recently, \citet{PKC:BKMT21} proposed $\fesr$, a generalisation of $\fe$ that
allows the decryption of the class of \emph{Stateful} and \emph{Randomised}
functions. This class is formally defined as
$\Ffamily=\{\funcF\mid \funcF : \FfamilyDef\}$, where $\festates$ is the set of
possible function states and $\ferand$ is the universe of random coins. They
also provide a protocol, called $\Steel$, that realises $\fesr$ in the UCGS
model and relies on TEEs (as abstracted by $\Gatt$). By introducing functions
with state and randomness, $\fesr$ allows computing a broader class of functions
than most other $\fe$ schemes. Moreover, other appealing properties of the
scheme include its secure realisation in the UC paradigm, and that it can be easily
extended to support multiple inputs (as we show in Subsection~\ref{sec::mi-fe}).
\label{sec:PPCT}



\section{Dynamic and Decentralised FESR ($\dfesr$) and Steel ($\dsteel$)}\label{sec::DD-FESR}



In this section, we present the ideal functionality $\dfesr$ and the protocol that realises it,
$\dsteel$. As we stated earlier, they are built upon the original functionality
$\fesr$ and protocol $\Steel$, respectively. In the formal descriptions, we will highlight the main changes that we have applied to the original scheme (in \cite{PKC:BKMT21}) in yellow. For conciseness, we omit any reference to UC-specific
machinery such as ``session ids'' unless it is required to understand the specifics
of our protocol.

\subsection{The ideal functionality $\dfesr$}\label{subsec::DD-FESR}
In this subsection, we extend $\fesr$ into a new functionality $\dfesr$
to capture two additional properties from the functional encryption
literature:

\begin{itemize}
  \item \emph{Decentralisation} (introduced in \cite{AC:CDGPP18}) allows a set
        of encryptors to be in control of functional key generation, rather than
        a single trusted authority of type $\PartyCS$. The \textsc{KeyGen}
        subroutine of the $\fesr$ scheme is replaced by a new subroutine
        \textsc{KeyShareGen}, which can be run by any party $\PartyA\in\PartyAS$
        to produce a ``key share''. In this case, a decryptor $\PartyB$ needs to
        collect at least $\lenkeyshare$ shares for some function $\funcF$ before
        $\PartyB$ is allowed to decrypt. This threshold parameter,
        $\lenkeyshare$, is specified by $\PartyA$ when it wants to encrypt, and
        is unique for each ciphertext, but does not restrict key share
        generation to a specific subset of $\PartyAS$.

  \item \emph{Dynamic} membership (introduced in \cite{C:CDGPP20}) allows any party
        to freely join set $\PartyAS$ during the execution of the protocol. In
        our instantiation, a new party $\PartyA$ joins through a local procedure
        which only requires the public parameters. $\dfesr$ can be instantiated
        with some preexisting $\PartyAS$ members, or with an empty set that
        is gradually filled through $\fes$ calls. Our current scheme is
        permissionless, meaning anyone can register as a new party.
\end{itemize}

\vspace{1mm}

\begin{functionality}{$\dfesr[\Ffamily,\PartyAS,\PartyBS,\PartyC$]}


  \noindent The functionality is parameterised by the randomized function class
  $\Ffamily=\{\funcF\mid \funcF: \FfamilyDef\}$, over state space $\festates$ and
  randomness space $\ferand$, and by three distinct types of party entities
  $\PartyA\in\PartyAS,\PartyB\in\PartyBS,\PartyC$ interacting with the
  functionality via dummy parties (that identify a particular role).
  %
  %
  \begin{statedecl} 
  \statevar{\CPartyAS}{[]}{List of corrupted As}
  \statevar{\CPartyBS}{[]}{List of corrupted Bs}
  \statevar{\leakfunc}{}{Leakage function returning the length of the message}
  \statevar{\Ffamily^+}{}{Union of the allowable functions and leakage function, i.e., $\Ffamily \cup \leakfunc$}
	\statevar{\mathsf{setup[\cdot]}}{\false}{ Table recording which parties were initialized.}
  \statevar{\preimages[\cdot]}{\bot}{Table storing the plaintext for each message handler}
  \statevar{\partySet[\cdot]}{\initstate}{Table of authorised functions' states for all decryption parties}
  \statevar{\keysharelist[\cdot]}{[]}{Table of key share generator  for each  (decryptor, function) pair}
  \end{statedecl}


  \begin{receive}[\party]{setup}{}  
    \State \Assert $\mathsf{setup}[\party] = \false$
    \State \AsyncSendAndReceive{setup}{\party}{\adversary}{$\hl{OK}$}
    \If{\hl{$\party.\pcode = \PartyA$}}
    \hl{$\PartyAS \gets \PartyAS \concat \party$} \algorithmiccomment{{\scriptsize{Party
      membership is determined by the code in the UC identity tape}}}
    \ElsIf{\hl{$\party.\pcode = \PartyB$}} \hl{$\PartyBS \gets \PartyBS \concat \party$}
    \Else{} \Return \EndIf
    \State$\mathsf{setup}[\party] \gets\mathsf{true}$
  \end{receive}
  \begin{receive}[\PartyA \in \PartyAS]{\hl{KeyShareGen}}{$\hl{$\funcF,\PartyB$}$}   
    \If{\hl{$\PartyA \in \CPartyAS$}} \algorithmiccomment{{\scriptsize{The adversary can only block key generation for
       corrupted parties}}}
     \State \AsyncSendAndReceive{\hl{keysharequery}}{$\hl{$\funcF,\PartyA,\PartyB$}$}{\adversary}{\msg{OK}{}}
     \EndIf
    \If{$\big(\funcF \in \Ffamily^{+}\land\mathsf{setup[\PartyA]} \land\mathsf{setup[\PartyB]} \big)$}
    \State
    \hl{$\keysharelist[\PartyB,\funcF] \gets \keysharelist[\PartyB,\funcF] \concat \PartyA$}\algorithmiccomment{{\scriptsize{We
    store the identity of all parties in $\PartyAS$ who authorised $\funcF$ for $\PartyB$}}}
    \State \Send{\hl{keysharegen}}{$\hl{$\funcF, \PartyA,\PartyB$}$}{\PartyB}
    \If{\hl{$\PartyB \in \CPartyBS \lor \PartyA\in\CPartyAS$}}{} \Send{\hl{keysharegen}}{$\hl{$\funcF, \PartyA,\PartyB$}$}{\adversary}
    \Else{} \Send{\hl{keysharegen}}{$\hl{$\bot, \PartyA,\PartyB$}$}{\adversary}
    \EndIf
    \EndIf
  \end{receive}
	\begin{receive}[\text{party } \party \in \{\PartyAS \cup \PartyBS\}]{Encrypt}{\fein,$\hl{$\lenkeyshare$}$}
      \If{$\big(\mathsf{setup}[\party] \land \fein \in \fedomain\land$\hl{$\lenkeyshare\mbox{ is an integer}$}$\big)$}
      \State compute $\handle \gets \getHandle$ \algorithmiccomment{{\scriptsize{Generate a unique index, $\handle$, by running the subroutine $\getHandle$}}}
      \State $\preimages[\handle] \gets (\fein,$\hl{$\lenkeyshare$})
      \State\Send{encrypted}{\handle}{\party}
    \Else
      \State\Send{encrypted}{\bot}{\party}
    \EndIf
  \end{receive}
  \begin{receive}[\text{party } \PartyB \in \PartyBS]{Decrypt}{\funcF,\handle}
    \State $(\fein,$\hl{$\lenkeyshare$}$) \gets \preimages[\handle]; \feout\gets\bot; $
    \If{$\funcF=\leakfunc$}
    \State $\feout\gets|\fein|$
   
    \ElsIf{\hl{$\big(( |\keysharelist[\PartyB,\funcF]| \geq \lenkeyshare \land \forall \PartyA\in\keysharelist[\PartyB,\funcF] : \mathsf{setup}[\PartyA] \land \forall\PartyA$
      are distinct $ )\lor$}\\\hl{$\quad\quad\lor (\PartyB\in\CPartyBS\land |\CPartyAS| \geq \lenkeyshare)\big)$}}
     \algorithmiccomment{{\scriptsize{There are at least $\lenkeyshare$ functional key shares, all
        generated by correctly setup\\  parties, OR $\PartyB$ and at least $\lenkeyshare$ parties in $\PartyAS$ are corrupted }}}
        \State $\festate \gets \partySet[\PartyB,\funcF]$
        \State $\ferc \samuni \ferand$
        \State $(\feout,\festate') \gets \funcF(\fein,\festate;\ferc)$
        \State $\partySet[\PartyB,\funcF] \gets \festate'$
    \EndIf
    \State \Return $\msg{decrypted}{\feout}$
  \end{receive}
  \begin{receive}[\adversary]{Corrupt}{\party}
  \If{\hl{$\party\in\PartyAS$}}
   \State\hl{$\CPartyAS \gets \CPartyAS \concat \party$}\algorithmiccomment{{\scriptsize{ The
   functionality needs to keep track of corrupted parties in $\PartyAS$ to ensure\\ correctness}}}
   \State\Return\hl{$\{(\PartyB,\funcF) | \party\in \keysharelist[\PartyB,\funcF]\}$}
  \EndIf 
   \If{\hl{$\party\in\PartyBS$}}
   \State \hl{$\CPartyBS \gets \CPartyBS \concat \party$}
   \State\Return\hl{$\keysharelist[\party,\cdot]$}
  \EndIf
  \end{receive}
\end{functionality}

\subsection{The protocol $\dsteel$}\label{subsec::DD-Steel}
Now, we propose $\dsteel$, a new protocol that extends the original $\Steel$ to
realise $\dfesr$. We first briefly provide an overview of $\Steel$ and our new
extension, before outlining the formal protocol.
$\Steel$ uses a public key encryption scheme, where the master public
key is distributed to parties in $\PartyAS$, and the master secret key is
securely stored in an enclave running program $\prog{\KME}$ on trusted party
$\PartyC$. The secret key is then provisioned to enclaves running on parties in
$\PartyB$, only if they can prove through remote attestation that they are
running a copy of $\prog{\DE}$. The functional key corresponds to signatures over
the representation of a function $\funcF$ and is generated by $\PartyC$'s $\prog{\KME}$
enclave. If party $\PartyB$ possesses any such key, its copy of $\prog{\DE}$
will distribute the master secret key to a $\prog{FE[\funcF]}$ enclave, which
can then decrypt any $\PartyA$'s encrypted inputs $\fein$ and will ensure that only
 value $\feout$  of $(\feout,\festate') \gets \funcF(\fein,\festate;\ferc)$ is
returned to $\PartyB$, with the function states $\festate$ and $\festate'$ protected by
the enclave.

The protocol $\dsteel$ is similar to the original version as described, except
for a few crucial differences. 
Party $\PartyC$, who is now untrusted, still runs the public key encryption
parameter generation within an enclave and distributes it to a party $\PartyA$
or $\PartyB$ when they first join the protocol. Each party $\PartyA$ also
generates a digital signature key pair locally and includes a key policy
$\lenkeyshare$ along with their ciphertext. Note that for simplicity our current
version uses an integer $\lenkeyshare$ to associate with each message, but it
would be possible to use a public key policy as a threshold version of
Multi-Client Functional Encryption. 
Party $\PartyA$ who wants to authorise party $\PartyB$ to compute a certain function
will run $\feksg(\funcF,\PartyB)$ to generate a key share, which requires signing the
representation of $\funcF$ with their local key, and send the signature to
$\PartyB$. 
For $\PartyB$'s $\prog{\DE^{\certvk}}$ enclave to authorise the functional decryption of $\funcF$, it
first verifies that all key shares provided by $\PartyB$ for $\funcF$ are valid
and each was provided by a unique party in $\PartyAS$; if all checks are passed, 
 then it will distribute the master secret key to $\prog{\FE^{\certvk}[\funcF]}$,
along with the length of recorded key shares $\lenkeysharef$.
$\prog{\FE^{\certvk}[\funcF]}$ will only proceed with decryption if the number of
key shares meets the encryptor's key policy.
Provisioning of the secret key between $\PartyC$ and $\PartyB$'s $\prog{\DE^{\certvk}}$
enclave remains as in $\fesr$.

The protocol $\dsteel$ makes use of the global attestation functionality
$\Gatt$, the certification functionality $\funccert$, the common reference
string functionality $\CRS$, the secure channel functionality $\SC_R^S$, and the
repository functionality $\REP$ that are presented in Supporting material~\ref{app::func_Gatt}, \ref{app::func_Fcert}, \ref{app:func_crs}, \ref{app::func_sc}, and~\ref{app::func_rep}  respectively. The code of the enclave programs
$\prog{\KME^{\certvk}},\prog{\DE^{\certvk}},\prog{\FE^{\certvk}[\cdot]}$ in $\dsteel$ is
hardcoded with the value of the verification key $\certvk$ returned by $\funccert$, and can be generated during the protocol
runtime.

\vspace{4mm}

\begin{protocol}{$\dsteel[\Ffamily,\pke,\sigscheme,\nizkscheme,\secp]$}

\noindent The protocol is parameterised by the class of functions $\Ffamily$ as defined in
$\dfesr$, the public-key encryption scheme $\pke$ denoted as the triple of algorithms $\pke:=(\pkeg,$ $\pkee,\pked)$, the digital signature scheme $\sigscheme$ denoted as the triple of algorithms $\sigscheme:=(\sigg,\sigs,\sigv)$, the non-interactive zero-knowledge protocol $\nizkscheme$ that consists of prover $\Prv$ and verifier $\Ver$, and the security parameter $\secp$.

\begin{statedecl}
 \statevar{\dsteelkshares[\cdot]}{\emptyset}{Table of function key shares at $\PartyB$}
 \statevar{\Flist[\cdot]}{\emptyset}{Table of functional enclave details at $\PartyB$}
\end{statedecl}

\noindent\textbf{\underline{Key Generation Authority $\PartyC$}:}
\begin{receive}[\SC^\party]{Setup}{\party}
  \If{$\mpk = \bot$}
    \State \SendAndReceive{Get}{}{\CRS}{\msg{Crs}{\ncrs}}
    \State \SendAndReceive{\hl{GetK}}{}{$\hl{$\funccert$}$}{$\hl{$\certvk$}$}
    \State  $\eid{\KME} \gets \Gatt.\install(\PartyC.\sid,$ \hl{$\prog{\KME^{\certvk}}$})
    \State  $(\mpk, $\hl{$\irsig{\KME}$}$) \gets \Gatt.\resume(\eid{\KME}, (\initkw,\ncrs,\PartyC.\sid))$
  \EndIf
  \If{$\party.\pcode = \PartyA$}
    \State  \Send{Setup}{\mpk,$\hl{$\irsig{\KME}$}$,$\hl{$\eid{\KME}$}$}{\SC_{\party}}
  \ElsIf{$\party.\pcode = \PartyB$}
	  \State \SendAndReceive{Setup}{\mpk,$\hl{$\irsig{\KME}$}$,\eid{\KME}}{\SC_{\party}}{\msg{provision}{\irsig{\DE}, \eid{\DE}, \pkd}}
	  \State  $(\ctsk, \irsig{sk}) \gets \Gatt.\resume(\eid{\KME}, (\provisionkw,
    (\irsig{DE}, \eid{\DE}, \pkd, \eid{\KME})))$
	  \State  \Send{provision}{\ctsk, \irsig{sk}}{\SC_{\party}}
  \EndIf
\end{receive}
\textbf{\underline{Encryption Party $\PartyA$}:}
\begin{receive}{Setup}{}
  \State \Assert $\mpk=\bot$
  \State \SendAndReceive{Setup}{\PartyA.\pid}{\SC^\PartyC}{\mpk,$\hl{$\irsig{\KME}$}$,$\hl{$\eid{\KME}$}$}
  \State \SendAndReceive{getpk}{}{\Gatt}{\gvk}
  \State \SendAndReceive{\hl{GetK}}{}{$\hl{$\funccert$}$}{$\hl{$\certvk$}$}
  \State \Assert \hl{$\sigscheme.\sigv(\gvk,(\sid,\eid{\KME},\prog{\KME^{\certvk}},\mpk),\irsig{KME})$}
  \State \SendAndReceive{Get}{}{\CRS}{\msg{Crs}{\ncrs}}
  \State \hl{$(\sigpk,\sigsk) \gets \sigscheme.\sigg(1^{\secp})$}
  \State \SendAndReceive{\hl{\certs}}{$\hl{$\sigpk$}$}{\funccert}{\certout}
  \State $\Store \mpk,\ncrs,\sigpk,\sigsk,\certout$
\end{receive}
\begin{receive}{\hl{KeyShareGen}}{$\hl{$\funcF,\PartyB$}$}
\State \hl{$\irsig{} \gets \sigscheme.\sigs(\sigsk,\funcF,\PartyB)$}
\State \Send{\hl{KeyShareGen}}{$\hl{$\funcF,\irsig{},\sigpk,\certout$}$}{\SC_B}
\end{receive}
\begin{receive}{Encrypt}{\pkemsg,$\hl{$\lenkeyshare$}$}
   \State \Assert{$\mpk \neq \bot \land \pkemsg \in \mathcal{X}\land\lenkeyshare\mbox{ is an integer} $}
   \State $\ct \xleftarrow{ \pkerc} \pke.\pkee(\mpk, (\pkemsg,$\hl{$\lenkeyshare$}$))$
   \State $\pi \gets \Prv((\mpk,\pkect),((\pkemsg,$\hl{$\lenkeyshare$}$),\pkerc),\ncrs);\ctmsg \gets (\pkect,\pi)$
   \State \SendAndReceive{write}{\ctmsg}{\REP}{\handle}
   \State \Return $\msg{encrypted}{\handle}$
\end{receive}
\textbf{\underline{Decryption Party $\PartyB$}:}
\begin{receive}{Setup}{}
  \State \Assert $\mpk=\bot$
  \State \SendAndReceive{Setup}{}{\SC^\PartyC}{\mpk,$\hl{$\irsig{\KME}$}$,\eid{\KME}}
  \State $\dsteelkshares = \{\},\Flist = \{\}$
  \State \SendAndReceive{getpk}{}{\Gatt}{\gvk}
  \State \SendAndReceive{\hl{GetK}}{}{$\hl{$\funccert$}$}{$\hl{$\certvk$}$}
  \State \Assert \hl{$\sigscheme.\sigv(\gvk,(\idx,\eid{\KME},\prog{\KME^{\certvk}},\mpk),\irsig{\KME})$}
  \State \Store $\mpk; \eid{\DE} \gets \Gatt.\install(\PartyB.\sid,$ \hl{$\prog{\DE^{\certvk}}$})
   \State \SendAndReceive{Get}{}{\CRS}{\msg{Crs}{\ncrs}}
	\State  $((\pkd, \cdot, \cdot), \irsig{}) \gets \Gatt.\resume(\eid{\DE}, (\insetupkw, \eid{\KME},\irsig{\KME},\ncrs,\PartyB.\sid))$
	\State \SendAndReceive{provision}{\irsig{}, \eid{\DE}, \pkd}{\SC_\PartyC}{\msg{provision}{\ctsk, \irsig{\KME}}}
  \State $\Gatt.\resume(\eid{\DE}, (\cosetupkw, \ctsk, \irsig{\KME}))$
\end{receive}
\begin{receive}[\SC^\PartyA]{\hl{KeyShareGen}}{$\hl{$\funcF,\irsig{},\sigpk,\certout$}$}
  \State \hl{$\dsteelkshares[\funcF] \gets \dsteelkshares[\funcF] \concat (\irsig{},\sigpk,\certout)$}
\end{receive}
\begin{receive}{Decrypt}{\funcF, \handle}
  \If{\hl{$\Flist[\funcF] = \bot$}}
  \State  $\eid{\funcF} \gets \Gatt.\install(\PartyB.\sid,$ \hl{$\prog{\FE^{\certvk}[\funcF]}$})
  \State  $(\pkf, \irsig{\funcF}) \gets \Gatt.\resume(\eid{\funcF}, (\initkw,\mpk,\PartyB.\sid))$
  \State $\Flist[\funcF] \gets (\eid{\funcF}, \pkf, \irsig{F})$
  \EndIf
  \State \SendAndReceive{read}{\handle}{\REP}{\ctmsg}
  \State $(\eid{F}, \pkf, \irsig{\funcF}) \gets \Flist[\funcF]$
  \State $((\ctsk,$\hl{$\lenkeysharef$}$, \ncrs), \irsig{\DE}) \gets \Gatt.\resume(\eid{\DE}, (\provisionkw, \dsteelkshares[\funcF],\eid{F},\pkf,\irsig{\funcF}, \funcF,\PartyB.\pid))$
  \State  $((\compkw, \feout), \cdot) \gets\Gatt.\resume(\eid{\funcF}, (\runkw, \irsig{\DE}, \eid{\DE}, \ctsk, \ctmsg,$\hl{$\lenkeysharef$}$,\ncrs,\bot))$
  \State \Return $\msg{decrypted}{\feout}$
\end{receive}

\noindent\underline{$\prog{KME^{\certvk}}$} \\
\text{on input init} \\
\pcind $(\pkepk,\pkesk) \gets \pke.\pkeg(1^{\secp})$ \\
\pcind \pcreturn $\pkepk$ \\
\text{on input ($\provisionkw, (\irsig{\DE}, \eid{\DE}, \pkd, \eid{\KME})$):} \\
   \pcind $\gvk \gets \Gatt.\gvk$; \textbf{fetch} $\ncrs,\idx,\pkesk$ \\
   \pcind \Assert $\sigscheme.\sigv(\gvk, (\idx, \eid{\DE}, \prog{\DE^{\certvk}}, (\pkd,\eid{\KME}, \ncrs), \irsig{\DE})$ \\
   \pcind $\ctsk \gets \pke.\pkee(\pkd, \pkesk)$ \\
  \pcind \pcreturn $\ctsk$\\

\noindent\underline{$\prog{DE^{\certvk}}$} \\
  \text{on input ($\insetupkw, \eid{\KME}, \ncrs,\idx$):} \\
  \pcind $\Assert \pkd\neq \bot$  \\
  \pcind $(\pkd, \skd) \gets \pke.\sigg(1^{\secp})$ \\
  \pcind $\Store  \skd,\eid{\KME},\ncrs,\idx$ \\
  \pcind \pcreturn $\pkd, \eid{\KME}, \ncrs$ \\
  \text{on input ($\cosetupkw, \ctsk, \irsig{\KME}$):} \\
  \pcind  $\gvk \gets \Gatt.\gattmpk$ \\
  \pcind \textbf{fetch } $\eid{\KME},\skd,\idx$ \\
  \pcind $\pkemsg \gets (\idx, \eid{\KME}, \prog{\KME^{\certvk}}, \ctsk)$\\
  \pcind \Assert $\sigscheme.\sigv(\gvk, \pkemsg, \irsig{\KME})$ \\
  \pcind $\pkesk \gets \pke.\pked(\skd, \ctsk)$ \\
  \pcind $\Store  \pkesk, \gvk$ \\
	\text{on input ($\provisionkw, \dsteelkshares_{\funcF}, \eid{\funcF},\pkf, \irsig{\funcF},\funcF,\pid$):} \\
  \pcind \textbf{fetch } $\eid{\KME},\gvk,\pkesk,\idx,\ncrs$ \\
  \pcind $\pkemsg \gets (\idx, \eid{}, \prog{\FE^{\certvk}[\funcF]}, \pkf)$\\
  \pcind \Assert \hl{$\forall (\irsig{\sigpk},\sigpk,\certout)\in\dsteelkshares_{\funcF}:$} \\
  \pcind \hl{$:(\sigscheme.\sigv(\certvk,\sigpk,\certout) \land$
   $ \sigscheme.\sigv(\sigpk,(\funcF,\pid),\irsig{\sigpk})\land$}\\
  \pcind \hl{$\land \forall\sigpk$ are distinct)} $\land\sigscheme.\sigv(\gvk, \pkemsg, \irsig{\funcF})$ \\
  \pcind \pcreturn $\pke.\pkee(\pkf, \pkesk),$\hl{$|\dsteelkshares_{\funcF}|$}$,\ncrs$\\
  

\noindent\underline{$\prog{\FE^{\certvk}[\funcF]}$} \\
  \text{on input ($\initkw,\mpk,\idx$):} \\
  \pcind $\Assert \pkf = \bot$ \\
  \pcind $(\pkf,\skf) = \pke.\sigg(1^{\secp}) $ \\
  \pcind $\mem \gets \emptyset; \Store  \skf,\mem,\mpk,\idx$ \\
  \pcind \pcreturn $\pkf$ \\
  \text{on input ($\runkw, \irsig{\DE}, \eid{\DE},\ctsk, \ctmsg,$\hl{$\lenkeysharef$}$, \ncrs, \feout')$:} \\
  \pcind \pcif $\feout' \neq \bot$ \\
  \pcind \pcind \pcreturn $(\compkw, \feout')$ \\
  \pcind $ \gvk \gets \Gatt.\gattmpk;  (\pkect,\pi) \gets \ctmsg$ \\
  \pcind \textbf{fetch } $\skf,\mem,\mpk,\idx$ \\
  \pcind $\pkemsg \gets (\idx, \eid{\DE}, \prog{\DE^{\certvk}}, (\ctsk,$\hl{$\lenkeysharef$}$,\ncrs))$ \\
  \pcind \Assert $\sigscheme.\sigv(\gvk, \pkemsg, \irsig{\DE}) $\\
  \pcind $\pkesk = \pke.\pked(\skf, \ctsk)$ \\
  \pcind \Assert $\nizkscheme.\Ver((\mpk,\pkect), \pi, \ncrs)$ \\
  \pcind $(\fein,$\hl{$\lenkeyshare$}$) = \pke.\pked(\pkesk, \pkect)$ \\
  \pcind \Assert \hl{$\lenkeysharef \geq \lenkeyshare$} \\
  \pcind $\out,\mem' = \funcF(\fein,\mem)$ \\
  \pcind $\Store \mem \gets \mem'$ \\
  \pcind \pcreturn $(\compkw, \out)$

\end{protocol}

\vspace{2mm}

\subsection{Proof of security}
We now formally state the security guarantees of $\dsteel$ as a Theorem:

\begin{theorem}\label{thm::ddfesr}
  For a class of functions $\Ffamily$, CCA-secure encryption scheme $\pke$, EU-CMA secure signature scheme $\sigscheme$, and non-interactive zero-knowledge proof system $\nizkscheme$,
  Protocol $\dsteel[\Ffamily,\pke,\sigscheme,\nizkscheme,\secp]$ UC-realises ideal functionality\\ $\dfesr[\Ffamily,\PartyAS,\PartyBS,\PartyCS]$, in the presence of global functionality $\Gatt$.
\end{theorem}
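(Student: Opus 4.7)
The plan is to prove the theorem by reducing to the UC-security analysis of the original $\Steel$ protocol established in~\cite{PKC:BKMT21}, and extending that argument to cover the three genuinely new features of $\dsteel$: the per-party key-share signatures produced by $\feksg$, the certification of encryptors' verification keys through $\funccert$, and the threshold policy $\lenkeyshare$ that is now carried by every ciphertext. I would construct a simulator $\Sim$ that internally runs the original $\Steel$ simulator on the unmodified portions of the code, emulates $\Gatt$, $\CRS$, $\SC$, $\REP$, and additionally $\funccert$, and relays the \textsc{setup}, \textsc{KeyShareGen}, \textsc{Encrypt}, and \textsc{Decrypt} interfaces between the environment and $\dfesr$. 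For an honest encryption, $\Sim$ first queries $\dfesr$ with the $\leakfunc$ handle to obtain $|\fein|$, then encrypts a length-matched dummy plaintext tagged with $\lenkeyshare$ under $\mpk$ and attaches a simulated NIZK proof. When a corrupt party's key share is delivered to an honest $\PartyB$, $\Sim$ verifies the certified signature, extracts the purported $\sigpk$, and issues \textsc{KeyShareGen}$(\funcF,\PartyB)$ inside $\dfesr$ on behalf of the corresponding registered $\PartyA$ so that $\keysharelist[\PartyB,\funcF]$ stays synchronised with the adversary's view.

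The indistinguishability argument proceeds through a hybrid sequence. First, swap every honest-party NIZK proof for one produced by the zero-knowledge simulator of $\nizkscheme$; indistinguishability follows from the ZK property. Second, replace every honest encryption with an encryption of a same-length zero string; indistinguishability reduces to the CCA security of $\pke$, using the $\prog{\KME^{\certvk}}/\prog{\DE^{\certvk}}/\prog{\FE^{\certvk}[\funcF]}$ enclave chain as the decryption oracle exactly as in the $\Steel$ proof, together with soundness of $\nizkscheme$ to handle malformed ciphertexts submitted by corrupt encryptors. Third, replace the enclave-side evaluation of $\funcF(\fein,\mem)$ with the output returned by $\dfesr.\textsc{Decrypt}$; this step is perfect because both sides evaluate the same stateful randomised function on identically-distributed random coins and maintain the per-$(\PartyB,\funcF)$ state $\partySet[\PartyB,\funcF]$ consistently. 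Unforgeability of $\sigscheme$ on the attestation keys $\gvk$ carries over unchanged from $\Steel$ to bind each enclave's code to its reported $\eid{}$.

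The main technical obstacle will be showing that the new admission gate inside $\prog{\DE^{\certvk}}$ is equivalent to the ideal-world decryption rule of $\dfesr$, namely $|\keysharelist[\PartyB,\funcF]| \geq \lenkeyshare \land \forall \PartyA \in \keysharelist[\PartyB,\funcF] : \mathsf{setup}[\PartyA]$ with all $\PartyA$ distinct, or the corruption bypass $\PartyB \in \CPartyBS \land |\CPartyAS| \geq \lenkeyshare$. This decomposes into two EU-CMA reductions against $\sigscheme$. The first is mounted against the certificate signer abstracted by $\funccert$: any accepting $\certout$ on a $\sigpk$ not previously registered through some party's \textsc{setup} call yields a chosen-message forgery. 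The second is mounted against the local signing key of an honest $\PartyA$: a valid signature on $(\funcF,\PartyB)$ under $\PartyA$'s $\sigpk$ that $\PartyA$ never produced via $\feksg(\funcF,\PartyB)$ would again break EU-CMA. Combined with the distinctness check on the $\sigpk$'s inside $\prog{\DE^{\certvk}}$, these two reductions force every accepted key share to correspond to a genuine \textsc{KeyShareGen} invocation by a setup-registered party, matching the $\dfesr$ rule exactly; the threshold inequality is then enforced directly by the $\lenkeysharef \geq \lenkeyshare$ assertion inside $\prog{\FE^{\certvk}[\funcF]}$. Dynamic membership adds no extra complication, since $\dfesr.\textsc{setup}$ already admits any party whose code label equals $\PartyA$ or $\PartyB$, and adaptive corruption of an encryptor is handled by $\Sim$ translating the set $\{(\PartyB,\funcF) \mid \party \in \keysharelist[\PartyB,\funcF]\}$ returned by $\dfesr$ into consistent local state for the adversary.
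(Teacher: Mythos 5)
Your overall strategy matches the paper's: both reduce the portions of $\dsteel$ that coincide with $\Steel$ to the existing UC-security proof of $\Steel$, and both isolate the new key-share admission gate as the genuinely new obligation, decomposing it into exactly the two EU-CMA reductions you describe (one against the $\funccert$ certificate signer, one against an honest encryptor's local signing key), combined with the distinctness check in $\prog{\DE^{\certvk}}$ and the $\lenkeysharef \geq \lenkeyshare$ assertion in $\prog{\FE^{\certvk}[\funcF]}$. The paper's Hybrids 2, 3 and 5 carry out precisely this plan.

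There is, however, one genuine gap: you state that your simulator ``emulates $\Gatt$''. The theorem is stated in the presence of $\Gatt$ as a \emph{global} functionality, so it is shared between the real and ideal executions and is directly accessible to the environment; the simulator cannot internally emulate it, nor can it extract the master attestation secret key to forge attestation signatures. The paper resolves this by assuming that at least one party in $\PartyBS \cup \PartyC$ is corrupted at the outset and having the simulator install \emph{all} enclaves --- those of honest and corrupted parties alike --- on that single party $\Gprime$ through the genuine $\Gatt$, relying on the anonymity of attestation (signatures do not reveal which registered party hosts an enclave) to make the resulting attested traces match the real distribution. Without this device, or an equivalent one, your simulation of the setup, provisioning, and decryption phases cannot produce valid attestation signatures at all, so the second hybrid in your sequence cannot even be set up. A second, smaller omission: when both the key-share issuer $\PartyA$ and the recipient $\PartyB$ are honest, $\dfesr$ deliberately hands the adversary only $(\bot,\PartyA,\PartyB)$, so the simulator must fabricate the secure-channel traffic for \textsc{KeyShareGen} without knowing $\funcF$; the paper's Hybrid 4 substitutes a dummy function (the leakage function) and argues indistinguishability from the length-only leakage of $\SC$ under a fixed-length encoding of functions. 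Your proposal only addresses corrupt-to-honest key-share delivery and would need this extra step to complete the simulation.
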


\begin{proof} We first construct a simulator, $\dfesrsim$. For simplicity of
  exposition, we use the simulator $\fesrsim$ in \cite{PKC:BKMT21} as a
  subroutine to $\dfesrsim$. We instantiate $\fesrsim$ such that shared
  functionalities (e.g., the secure channels between multiple parties) are
  implemented by $\dfesrsim$, so that it can intercept messages to the parties
  whose behaviour is simulating and act accordingly. $\dfesrsim$ also acts as
  the ideal functionality in the eyes of the $\fesrsim$ simulator. We reproduce
  the original $\fesrsim$, with appropriate modifications to conform to the
  message syntax of $\dfesr$ and $\dsteel$, in Supporting
  material~\ref{app:steelsim}.

  We assume that at least one party in the set $\PartyBS \cup \PartyC$ is
  corrupted at the start of the protocol. We choose this party, $\Gprime$, to
  install all $\Gatt$ enclaves for all participants in the protocol, be they
  honest or corrupted. Due to the property of anonymous attestation guaranteed
  by $\Gatt$, the simulator can install all programs on the same machine to
  produce the attested trace of the real-world protocol, as long as it does not
  allow $\Gprime$ to execute other parties' enclaves on its own initiative (we
  use the table $\gattsent$ from simulator $\fesrsim$ to keep track of which
  party installed each enclave). Similar to the original simulator, we use
  the shorthand $\outp\gets~\Gatt.\mathsf{command}(\inp)$ to indicate
  ``\SendAndReceive*[$\Gprime$][sim]{$\msg{command}{\inp}$}{$\Gatt$}{$\outp$}''; note, 
  in \cite{PKC:BKMT21}, the message was always sent from $\PartyB$ instead,
  given the simpler setting of one honest $\PartyC$ and one corrupted $\PartyB$
  in their proof.

  We also give $\dfesrsim$ white-box access to $\fesrsim$, letting the
  former freely access the internal tapes of the latter. We mark the names of
  the variables that are read from $\fesrsim$ in the state variable declaration
  below. In particular, we use the internal values of $\fesrsim$ to keep track
  of messages sent to the enclaves and their attestation signatures. For all
  calls to an enclave, the $\dfesrsim$ simulator always activates
  $\fesrsim$ so that these internal variables can be updated.

   \clearpage

  \begin{simulator}{$\dfesrsim$}
     \vspace{-7mm}
    \begin{statedecl}
    
      \statevar{\dfesrsimKS}{\{\}}{set of $\PartyAS$ keypairs and $\funccert$ certificates}
      \statevar{\dfesrsimKShares}{\{\}}{set of generated keyshares}
      \statevar{\Gprime}{\bot}{The corrupted party on which we run simulated enclaves} 
      \statevar{\gattsent = \fesrsim.\gattsent}{}{Collects all messages sent to $\Gatt$ and its response} 
      \statevar{\gattsigned = \fesrsim.\gattsigned}{}{Collects all messages signed by $\Gatt$}
      \statevar{\ncrs = \fesrsim.\ncrs}{}{Simulated common reference string}
    \end{statedecl}
    \begin{receive}[\dfesr]{Setup}{\party}

      \If{$\party.\pcode=\PartyA$}

      \If{$\party$ is honest}
        \If{$\PartyC$ is honest}
          \State \Send{setup}{\party}{\fesrsim}
        \Else
          \State \SendAndReceive{Setup}{\PartyA.\pid}{\SC^\PartyC}{\mpk,\irsig{\KME},\eid{\KME}}
          \State \Assert $(\PartyC.\sid,\eid{\KME},\prog{\KME^{\certvk}},\mpk)\in\gattsigned[\irsig{KME}]$
        \EndIf
        \State $\gvk\gets\Gatt.\getpk$
        \State \SendAndReceive{\certg}{}{\funccert}{\certvk}
        \State $(\sigpk,\sigsk) \gets \sigscheme.\sigg(1^{\secp})$
        \State \SendAndReceive*[\party][sim]{$\msg{\certs}{\sigpk}$}{$\funccert$}{$\certout$}
        \State $\dfesrsimKS[\party] \gets (\sigpk,\sigsk,\certout)$
        \State \Send{OK}{}{\dfesr}
      \Else
        \If{$\PartyC$ is honest}
          \State \Send*[$\environment$][sim]{$\msg{setup}{\party}$}{$\party$}
          \State \Await for message $\msg{setup}{\party}$ on $\SC^{\PartyC}_{\party}$
          \State \Send{setup}{\party}{\fesrsim} \State \Send{OK}{}{\dfesr}
        \Else
          \State \Send*[$\environment$][sim]{$\msg{setup}{\party}$}{$\party$}
          \State await for message $\msg{Sign}{\sigpk}$ from $\party$ to $\funccert$
          \State \Send{OK}{}{\dfesr}
        \EndIf
      \EndIf
    \ElsIf{$\party.\pcode=\PartyB$}
      \If{$\party$ is honest}
        \If{$\PartyC$ is honest}
          \State \NotifyAndReceive{$\fesrsim$}{$\msg{install}{\prog{\DE^{\certvk}}}$}{$\party$}{$\Gatt$}{$\eid{\DE}$}
          \State \SendAndReceive{setup}{\party}{\fesrsim}{\mpk, \irsig{\KME}, \eid{\KME}}
          \State \NotifyAndReceive{$\fesrsim$}{$\msg{resume}{\insetupkw,\eid{\KME},\ncrs,\party.\sid}$}{$\party$}{$\Gatt$}{$(\pkd, \eid{\KME}, \ncrs),\irsig{init}$}
          \State
          \SendAndReceive{provision}{\irsig{init},\eid{\DE},\pkd}{\fesrsim}{\msg{provision}{\ctsk,\irsig{\KME}}}

          \State \Notify{$\fesrsim$}{$\msg{resume}{\cosetupkw,\ctsk,\irsig{\KME}}$}{$\party$}{$\Gatt$}
          \State \Send{OK}{}{\dfesr} \Else
          \State \Send*[$\environment$][sim]{$\msg{setup}{\party}$}{$\party$}
          \State \Await for $\msg{setup}{}$ message on $\SC^{\PartyC}_{\party}$
          \State \Send{setup}{\party}{\fesrsim} \EndIf \Else \If{$\PartyC$ is honest}
          \State \Send{setup}{\party}{\fesrsim}
          \State \Send{OK}{}{\dfesr} \Else
          \State \Send*[$\environment$][sim]{$\msg{setup}{\party}$}{$\party$}
          \State \Await for $\msg{resume}{\cosetupkw,\ctsk,\irsig{\KME}}$ from
          $\party$ to $\Gatt$
          \State $(\idx,\eid,\prog{},\ctsk)\gets\gattsigned[\irsig{\KME}]$
          \State \Assert
          $\idx=\party.\sid \land \prog{} = \prog{\DE^{\certvk}} \land (\irsig{\KME},\cdot,\ctsk)\in\gattsent[\eid{}].\resume$

          \State \Send{OK}{}{\dfesr}
        \EndIf
      \EndIf

      \EndIf

    \end{receive}
    \begin{receive}[\text{ corrupted party } \party \text{ to } \funccert]{Sign}{\sigpk}
      \State \Forward$\msg{Sign}{\sigpk}$ and receive response $\certout$
      \State $\dfesrsimKS[\party]\gets(\sigpk,\bot,\certout)$
    \end{receive}
    \begin{receive}[\dfesr]{keysharequery}{x,\PartyA,\PartyB}
      \Comment{A is corrupted}
      \State \Send*[$\env$]{$\msg{KeyShareGen}{x,\PartyA,\PartyB}$}{$\PartyA$} and \Await for $\msg{KeyShareGen}{x,\PartyA,\PartyB}$ from $\PartyA$ to $\SC^{\PartyB}$
      \State \Return $\msg{OK}{}$
    \end{receive}

  \begin{receive}[\dfesr]{keysharegen}{f,\PartyA,\PartyB}
    \If{$\PartyA$ is honest}
      \If{$\PartyB$ is honest} {}$\funcF\gets\leakfunc$\Else{ }$\funcF\gets f$\EndIf

        \State $(\sigpk,\sigsk,\certout)\gets\dfesrsimKS[\PartyA]$
        \State $\irsig{}\gets\sigscheme.\sigs(\sigsk,\funcF,\PartyB)$
        \State $\dfesrsimKShares[\funcF,\PartyA,\PartyB]\gets\irsig{}$
        \State \Send{KeyShareGen}{\funcF,\irsig{},\sigpk,\certout}{\SC^{\PartyB}_{\PartyA}}


    \EndIf
  \end{receive}
  \begin{receive}[\text{corrupted party } \party \text { to } \Gatt]{resume}{\eid{},\inp}
    \If{$\gattsent[\eid{}].\install[1]=\prog{\DE^{\certvk}} \land \inp[0]=\provisionkw $} 
    \State $(\provisionkw,\dsteelkshares_{\funcF},\eid{\funcF},\pkf,\irsig{\funcF},\funcF,\pid)\gets\inp$
    \For{$(\irsig{},\sigpk,\certout)\in\dsteelkshares_{\funcF}$}
    \State \Assert
    $(\sigpk,\cdot,\certout)=\dfesrsimKS[\funcF,\PartyA,\party]$ for
    some $\PartyA$
    \If{$\irsig{}\not\in\dfesrsimKShares[\funcF,\PartyA,\party]$}
      \State $\dfesrsimKShares[\funcF,\PartyA,\party]  \gets \irsig{}$
        \State
        \AsyncSendAndReceive*[$\PartyA$]{$\msg{KeyShareGen}{\funcF,\irsig{},\sigpk,\certout}$}{$\dfesr$}{$\msg{KeyShareQuery}{\funcF,\PartyA,\party}$} \Send{OK}{}{\dfesr}
    \EndIf
    \EndFor
    \EndIf
    \State \Send{resume}{\eid{},inp}{\fesrsim}
  \end{receive}
  \begin{receive}[\fesrsim\text{ on behalf of } \party]{encrypt}{\inp}
    \State
    \SendAndReceive*[$\party$]{$\msg{encrypt}{\inp}$}{$\dfesr$}{$\msg{encrypted}{\outp}$}
    \State \Send*[$\dfesr$]{$\msg{encrypted}{\outp}$}{$\fesrsim$}
  \end{receive}
  \begin{receive}[\fesrsim\text{ on behalf of } \party]{decrypt}{\inp}
    \State\SendAndReceive*[$\party$]{$\msg{decrypt}{\inp}$}{$\dfesr$}{$\msg{decrypted}{\outp}$}
    \State \Send*[$\dfesr$]{$\msg{decrypted}{\outp}$}{$\fesrsim$}
  \end{receive}
  \begin{receive}[none]{*}{}
    \State forward * to $\fesrsim$
  \end{receive}
\end{simulator}
We now show, via a series of hybrid experiments, that given the above simulator, the real and ideal worlds are
indistinguishable from the environment's viewpoint. We begin with the real-world protocol,
which can be considered as \emph{Hybrid} 0.

\emph{Hybrid} 1 consists of the ideal protocol for $\dfesr$, which includes
the relevant dummy parties, and the simulator $\dfesrsim'$, which on any message
from the environment ignores the output of the ideal functionality, and
faithfully reproduces protocol $\dsteel$. The equivalence between
\emph{Hybrid}s 0 and 1 is trivial due to the behaviour of $\dfesrsim'$.

\emph{Hybrid} 2 replaces all operations of $\dfesrsim'$ where the protocol
$\dsteel$ behaves in the same way as $\Steel$ (except that it sends messages
with the full set of arguments expected by $\dsteel$ rather than those in
$\Steel$, and receives the equivalent $\dsteel$ return values) with a call to an
emulated $\fesrsim$ as defined in Supporting Material ~\ref{app:steelsim}. Due
to the security proof of the $\Steel$ protocol in \cite{PKC:BKMT21}, we now use the
simulator of \emph{Hybrid} 2 to simulate $\fesr$ with respect to protocol $\Steel$,
making the two hybrids indistinguishable. An environment that is able to distinguish
between the two hybrids could create an adversary that can distinguish between
executions of $\fesr$ and $\Steel$; but due to the UC emulation statement, no
such environment can exist in the presence of $\fesrsim$. The reduction to
$\fesrsim$ greatly simplifies the current proof, as we are guaranteed the
security of the secure key provisioning and decryption due to the similarities
of these two phases of the protocols between $\Steel$ and $\dsteel$.

\emph{Hybrid} 3 modifies the simulator of \emph{Hybrid} 2 by replacing all the signature verification operations for
attestation signatures in $\dfesr$ with a table lookup from
$\fesrsim.\gattsigned$. The new table lookups for attestation signatures
complement the ones enacted by $\fesrsim$, while capturing behaviour that is
unique to $\dsteel$. Similar to \cite[Lemma~2]{PKC:BKMT21}, if the
environment can distinguish between this hybrid and the previous one, it can
construct an adversary to break the unforgeability of signatures.


\emph{Hybrid} 4 modifies the simulator of \emph{Hybrid} 3 by replacing \textsc{KeyShareGen} and \textsc{KeyShareQuery} requests for any
functions with a request for a dummy function (such as the natural leakage
function $\leakfunc$), for all these requests where both the
encryptor and the decryptor are honest. The environment is not able to
distinguish between the two hybrids due to the security of the secure channel
functionality (as defined in Supporting Material \ref{app::func_sc}): the secure
channel only leaks the length of a message exchanged between sender and
receiver, and assuming that we represent functions with a fixed-length string
(such as a hash of it's code), the leakage between this hybrid and the previous
one is indistinguishable.

\emph{Hybrid} 5 adds an additional check to the simulator of \emph{Hybrid} 4 before it can run the
$\provisionkw$ command on enclave $\prog{DE^{\certvk}}$ through the internal
$\fesrsim$ simulator. The check ensures that all keyshares passed by the
malicious decryptor to the enclave are signed by a party who has first
registered their verification key with the certificate authority. Then, if the
signature has not been generated through a call to the ideal functionality but
rather through a local signing operation, the simulator notifies the ideal functionality to
update its internal keyshare count. This hybrid essentially replaces the algorithmic
signature verification operations in the previous one with two table lookups
(for both verification key certification and keyshare authenticity). If an
adversary was able to bypass the checks by providing either a certificate that
wasn't produced by the ideal functionality or a keyshare that didn't match with
the triple of ($\funcF,\PartyA,\PartyB$), they would be able to create an
adversary that could break the unforgeability of signature scheme $\sigscheme$
in the same manner as in Hybrid~3.
Thus the hybrid is indistinguishable from the previous one.

The simulator defined in \emph{Hybrid} 5 is identical to $\dfesrsim$; thus, it
holds that $\dsteel$ UC-emulates $\dfesr$. 
  \hfill\(\Box\)
\end{proof}

\subsection{Turning Stateful functions into Multi-Input}\label{sec::mi-fe}
As pointed out in \cite{PKC:BKMT21}, $\fesr$ subsumes Multi-Input Functional
Encryption \cite{EC:GGGJKL14} in that it is possible to use the state to emulate
functions over multiple inputs. We now briefly outline how to realise a
Multi-Input functionality using $\fesr$ (and by extension $\dfesr$) through the
definition of a simple compiler from single-input stateful functions to
multi-input functions.
To compute a stateless, multi-input function
$\funcF : (\underbrace{\fedomain \times \dots\times \fedomain}_{n} ) \rightarrow \ferange$
we define the following stateful functionality:
\begin{algorithmic}
\Function{$\aggregfunc_{\funcF,n}$}{$\fein,\festate$}
\If{$|\festate| < n$} \Return ($\bot,\festate \concat \fein$)
\Else{} \Return $(\funcF(\festate || \fein),\initstate)$\algorithmiccomment{$\festate$ is equivalent to the array containing $\fein_{1},\dots,\fein_{n-1}$}
\EndIf
\EndFunction
\end{algorithmic}
where input $x$ is in $\fedomain$, the state $\festate$ is in $\festates$, and $n$ is bounded by the maximum size of $\festates$.
The above aggregator function is able to merge the inputs of multiple encryptors
because in $\fesr$ the state of a function is distinct between each decryptor;
therefore, multiple decryptors attempting to aggregate inputs will not interfere
with each other's functions.
%
%
There are several possible extensions to the above compiler:
\begin{enumerate}
  \item In $\aggregfunc_{\funcF,n}(\cdot)$, the order of parameters relies on the
decryptor's sequence of invocations. If $\funcF$ is a function where the order
of inputs affects the result,
malicious decryptor $\PartyB$ could choose not to run decryption in the same
order of inputs as received. It is possible to further extend the
decryption function to respect the order of parameters set by each encryptor. If
a subset of encryptors is malicious, we can parametrise the function by a set of
public keys for each party, and ask them to sign their inputs.
  \item\label{compiler:stateful} The compiler can be easily extended to multi-input \emph{stateful} functionalities, by
keeping a list of inputs (as a field) within the state array and not
discarding the state on the $n$-th invocation of the compiler.
  \item\label{compiler:variable} One additional advantage of implementing
        Multiple-Input functionalities through stateful functionalities is that
        we are not constrained to functions with a fixed number of inputs. If we
        treat the inner functionality to the compiler as a function taking as
        input a list, we can use the same compiler functionality for inner
        functions of any $n$-arity (we denote this type of functions as
        $\enmuldomain \rightarrow \enmulrange$). On the first (integer) input to
        the aggregator, we set it as a special field $n$ in the state, and for
        the next $n-1$ calls we simply append the inputs to the state, while
        returning the empty value. On the $n$th call, we execute the function on
        the stored state field (now containing $n$ entries), erase the state and
        index from memory and wait for the next call to set a new value to $n$.
\end{enumerate}

In the next section, we will use the compiler $\aggregsfunc_{\funcF}$, which
combines the above defined compiler extensions \ref{compiler:stateful} and
\ref{compiler:variable} to compute any stateful function
$\funcF$ with variable number of inputs
($\funcF : \enanaldomain \rightarrow \enanalrange$) from $\dfesr$.





\section{The $\glassvault$ platform}\label{sec::gv}

In this section, first, we provide the formal
definition of ``Analysis-augmented Exposure Notification'' ($\enplus$) which is
an extension of the standard Exposure Notification ($\enscheme$), proposed in
\cite{EPRINT:CKLRSSTVW20}, to allow arbitrary computation on data shared by
users. Then, we present $\glassvault$ and show that it
UC-realises the ideal functionality of $\enplus$, i.e., $\enplusideal$.

\vspace{-3mm}
\subsection{Analysis-augmented Exposure Notification ($\enplus$)}\label{sec::enplus}


Since $\enplus$ is built upon $\enscheme$, we first re-state relevant notions
used in the UC modelling of ${\enscheme}$. 
Specifically, $\enscheme$ relies on the time functionality $\timeidealfunc$ and the ``physical reality''
functionality $\phyidealfunc$ that we present in  Supporting Material \ref{app::func_time}
and \ref{app::func_phys}, respectively. In particular, $\phyidealfunc$ models
the occurrence of events in the physical world (e.g., users' motion or location
data). Measurements of a real-world event are sent as input from the
environment, and each party can retrieve a list of their own measurements. The
functionality only accepts new events if they are ``physically sensible'', and
can send the entire list of events to some privileged entities such as ideal
functionalities. Using $\timeidealfunc$ and $\phyidealfunc$ as subroutines, the
functionality of $\enscheme$, i.e., $\funct{\enscheme}$ (presented formally in
Supporting Material \ref{app::func_en}), is defined in terms of a risk estimation function
$\enriskfunc$, a leakage function $\enleakfunc$, a set of allowable measurement
error functions $\enerrorfuncset$, and a set of allowable faking functions
$\enfakingfuncset$. The functionality queries $\phyidealfunc$ and applies the
measurement error function chosen by the simulator to compute a ``noisy record
of reality''. This in turn is used to decide whether to mark a user as infected,
and to compute a risk estimation score for any user. The adversary can mark some
parties as corrupted and obtain leakage of their local state, as well as
modifying the physical reality record with a reality-faking function. This
allows simulation of adversarial behaviour such as relay attacks (where an
infectious user appears to be within transmission distance from a
non-infectious malicious user). 
%
The functionality captures a variety of contact tracing
protocols and attacker models via its parameters. For simplicity, it does
not model the testing process the users engage in to find out they are positive,
and it assumes that once a user is notified of exposure they are removed from
the protocol.

The extension to $\enplus$ involves an additional entity to the
above scheme, namely, analyst $\enanalyst$, who wants to learn a certain
function, $\enanalfunc$, on data contributed by exposed users, some of which
might be sensitive. Thus, exposed users are provided with a mechanism to accept
whether an analyst is allowed to receive the result of the executions of any
particular function. In order to receive the result, the analyst needs to be
authorised by a portion of exposed users determined by function
$\enkeythreshold$. We denote by $\phyrecsec$ a field in the physical reality
record for a user to be used for storing any sensitive data.

We now present a formal definition of the ideal functionality for $\enplus$.
Highlighted sections of the functionality represent where $\enplus$ diverges
from $\enscheme$.


\vspace{1mm}

\begin{functionality}{$\enplusideal[\enriskfunc,\enerrorfuncset,\enfakingfuncset,\enleakfunc,\enanalfuncset,\enkeythreshold]$}


  \noindent The functionality is parametrised by exposure risk function
  $\enriskfunc$, a set of allowable error functions $\enerrorfuncset$ for the
  physical reality record, a set of faking functions $\enfakingfuncset$ for the
  adversary to misrepresent the physical reality, and a leakage function
  $\enleakfunc$, as in the regular $\funct{\enscheme}$. ${\enanalfuncset}$ is
  the set of all functions
  $\{\enanalfunc\mid \enanalfunc:\enanaldomain\rightarrow \enanalrange\}$
  an analyst could be authorised to compute. $\enkeythreshold()$ is a function
  of the current number of users required to determine the minimum threshold of
  analyst authorisations.
  \begin{statedecl}
    \statevar{\ensharexprec}{}{List of users who have shared their exposure
      status and time of upload}
    \statevar{\enusers}{}{List of active users; $\ensharexprec \cap \enusers = \emptyset$ }
    \statevar{\enuserscorr}{}{List of corrupted users}
    \statevar{\enanalysts}{}{For each pair of analyst and allowed function, the dictionary $\enanalysts$ contains the users that have authorised this pair}
    \statevar{\enanalcorr}{}{Static set of corrupted analysts}
    \statevar{\enanalstate}{}{State table for function, analyst pairs}
  \end{statedecl}
  \begin{receive}[\adversary]{Setup}{\epsilon^{*}}
    \State \Assert $\epsilon^{*} \in \enerrorfuncset$
     \State $\ennoisyrec \gets \emptyset$\algorithmiccomment{Initialise noisy record of
      physical reality}
  \end{receive}
  \begin{receive}{ActivateMobileUser}{\enuser}
    \State $\enusers \gets \enusers \concat \enuser$
    \State \Send{ActivateMobileUser}{\enuser}{\adversary}
  \end{receive}
  \begin{receive}{ShareExposure}{\enuser}
    \State \SendAndReceive{AllMeas}{\epsilon^{*}}{\phyidealfunc}{\ennewnoisyrec}
    \State $\ennoisyrec \gets \ennoisyrec \concat \ennewnoisyrec$ 
    \If{$\ennoisyrec[\enuser][\phyrecinf] = \bot$}{} \Return $\enerrmsg$
    \Else

    \State \SendAndReceive{time}{}{\timeidealfunc}{\mathit{t}} \State
    $\ensharexprec \gets \ensharexprec \concat (\enuser,\mathit{t}); \enusers \gets \enusers \setminus
    \{\enuser\}$ 
    \If{\hl{$\enuser \in \enuserscorr$}}
    \Send{\hl{ShareExposure}}{$\hl{$\enuser,\ennoisyrec[\enuser][\phyrecsec]$}$}{\adversary}
    \Else{ }\Send{ShareExposure}{\enuser,\bot}{\adversary} \EndIf \EndIf
  \end{receive}
  \begin{receive}{ExposureCheck}{\enuser}
    \If{$\enuser \in \enusers$}
    \State \SendAndReceive{AllMeas}{\epsilon^{*}}{\phyidealfunc}{\ennewnoisyrec}
    \State $\ennoisyrec \gets \ennoisyrec \concat \ennewnoisyrec ;\mu \gets \ennoisyrec[\enuser] \concat \ennoisyrec[\ensharexprec]$
      \State \Return $\enriskfunc(\enuser,\mu)$
    \Else{} \Return $\enerrmsg$
    \EndIf
  \end{receive}
  \begin{receive}[\enanalyst]{\hl{RegisterAnalyst}}{ $\hl{\ $ \enanalfunc,\enanalyst$}$}
    \If{\hl{$\enanalfunc \in \enanalfuncset$}}

      \State \Send{\hl{RegisterAnalyst}}{$\hl{\ $\enanalfunc,\enanalyst$}$}{\adversary}
      \ForAll {\hl{$\enuser\in\ensharexprec$}} \Send{\hl{RegisterAnalystRequest}}{$\hl{\ $\enanalfunc,\enanalyst,\enuser$}$}{\enuser}\EndFor
      \State \hl{$\enanalysts[\enanalfunc,\enanalyst] \gets []$}
    \EndIf
  \end{receive}
  \begin{receive}[\enuser]{\hl{RegisterAnalystAccept}}{$\hl{\ $\enanalfunc,\enanalyst,\enuser$}$}
    \If{\hl{$\enuser\in\enuserscorr\lor\enanalyst\in\enanalcorr$}} \AsyncSendAndReceive{\hl{RegisterAnalystAccept}}{$\hl{$\enanalfunc,\enanalyst,\enuser$}$}{\adversary}{\msg{OK}{}}
    \EndIf
    \State
    \hl{$\enanalysts[\enanalfunc,\enanalyst] \gets \enanalysts[\enanalfunc,\enanalyst]\concat\enuser$}
    \State \Send{\hl{RegisterAnalystAccept}}{$\hl{\ $\enuser,\enanalfunc$}$}{\enanalyst}
  \end{receive}
%
  \begin{receive}{\hl{Analyse}}{$\hl{\ $\enanalfunc$}$}
    \If{\hl{\mbox{$|\enanalysts[\party,\enanalfunc]| \geq \enkeythreshold(|\ensharexprec|)$}}}
      \State
      \hl{\mbox{$(\feout,\enanalstate') \gets \enanalfunc(\ennoisyrec[\ensharexprec][\phyrecsec],\enanalstate[\enanalfunc,\enanalyst])$}}
      \State \hl{$\enanalstate[\enanalfunc,\enanalyst]\gets\enanalstate'$}
          \If{\hl{$\party\in\enanalcorr$}} \AsyncSendAndReceive{\hl{Analysed}}{$\hl{$\enanalfunc,\party,y$}$}{\adversary}{OK}\EndIf

      \State \Return \hl{$\feout$}
    \EndIf
  \end{receive}
  \begin{receive}{RemoveMobileUser}{\enuser}
      \State $\enusers \gets \enusers \setminus \{\enuser\}$
  \end{receive}
  %
  \begin{receive}[\adversary]{Corrupt}{\enuser}
    \State $\enuserscorr \gets \enuserscorr \concat \enuser$
    \State \Return \hl{$\{(\enanalfunc,\enanalyst ): \enuser\in\enanalysts[\enanalfunc,\enanalyst ]\}$}
  \end{receive}
  \begin{receive}[\adversary]{MyCurrentMeas}{\enuser,\phyrecfields,\phyerrfunc}
    \If{$\enuser \in \enuserscorr$}
    \State \SendAndReceive{MyCurrentMeas}{\enuser,\phyrecfields,\phyerrfunc}{\phyidealfunc}{\phymeasret}
    \State \Send{MyCurrentMeas}{\phymeasret}{\adversary}
    \EndIf
  \end{receive}
  \begin{receive}[\adversary]{FakeReality}{\phi}
    \If{$\phi \in \enfakingfuncset$}{} $\ennoisyrec \gets \phi(\ennoisyrec)$
    \EndIf
  \end{receive}
  \begin{receive}[\adversary]{Leak}{}
    \State \Send{Leak}{\enleakfunc(\{\ennoisyrec, \enusers, \ensharexprec\})}{\adversary}
  \end{receive}
  \begin{receive}[\environment]{IsCorrupt}{\enuser}
    \State \Return $\enuser \stackrel{?}{\in} \enuserscorr$
  \end{receive}
\end{functionality}

\subsection{$\glassvault$ protocol}\label{subsec::glassvault}

In this section, we present the $\glassvault$ protocol. It is a delicate
combination of two primary primitives; namely, (i) the original exposure
notification proposed by \citet{EPRINT:CKLRSSTVW20}, and (ii) the enhanced
functional encryption that we proposed in Section \ref{sec::DD-FESR}. In this
protocol, infected users upload not only the regular data needed for exposure
notification but also the encryption of their sensitive measurements (e.g.,
their GPS coordinates, electronic health records, environment's air quality).
Once an analyst requests to execute some specific computations on exposed users'
data, the users are informed via public announcements. At this stage, users can
provide permission tokens to the analyst (in the form of functional keysahres),
who can run such computations only if the number of tokens exceeds a threshold
defined by the function $\enkeythreshold$ over the number of parties in the set
$\PartyAS$ of $\dfesr$ (note that since a party is registered to $\dfesr$ only
if they are exposed, the number of exposed users matches the size of
$\PartyAS$). Due to the security of the proposed functional encryption scheme,
the $\glassvault$ analyst does not learn anything about the users' sensitive
inputs beyond what the function evaluation reveals.

It is not hard to see that this protocol (a) is generic, as it supports
arbitrary secure computations (i.e., multi-input stateful and randomised
functions) on users' shared data, and (b) is transparent, as computations
are performed only if permission is granted by a sufficient number of users and
in that the user can choose whether they are willing to share their sensitive data
or not, making the collection of information consensual.
Besides the $\enscheme$ and $\dfesr$ functionality, the $\glassvault$ protocol makes use of the
physical reality functionality $\phyidealfunc$, the exposure notification
functionality $\enidealfunc$, and the trusted bulletin board functionality
$\tbbidealfunc$, described in Supporting Material \ref{app::func_phys}, \ref{app::func_en},
and~\ref{app::func_tbb}, respectively.

\vspace{1mm}


%
%


%

\begin{protocol}{$\glassvault[\enriskfunc,\enerrorfuncset,\enfakingfuncset,\enleakfunc,\enanalfuncset,\enkeythreshold,\PartyC]$} 


  \noindent The protocol takes the same class of parameters as defined in
  $\enplusideal$, and the identity of a trusted authority $\PartyC$. We use
  $\enuser$ to refer to a normal user of the Exposure Notification System
  (corresponding to $\PartyA$ in $\dfesr$). We use $\enanalyst$ to refer to an
  analyst (corresponding to $\dfesr$'s decryptor, $\PartyB$). Among other ideal
  setups, $\glassvault$ leverages the exposure notification ideal functionality
  $\enscheme[\enriskfunc,\enerrorfuncset,\enfakingfuncset,\enleakfunc]$ and
  functional encryption ideal functionality
  $\dfesr[\enanalfuncset,\emptyset,\emptyset,\PartyC]$.

    \noindent\underline{User $\enuser$:}
    \begin{receive}{ActivateMobileUser}{}

      \State \Send{ActivateMobileUser}{\enuser}{\enidealfunc}
    \end{receive}
    \begin{receive}{ShareExposure}{}
    \State \SendAndReceive{ShareExposure}{\enuser}{\enidealfunc}{r}
    \If{$r \neq \enerrmsg$}
    \State \Send{Setup}{}{\dfesr}
    \State \SendAndReceive{MyCurrentMeas}{\enuser,\phyrecsec,\phyerrfunc}{\phyidealfunc}{\phyusermeas^{\phyerrfunc}_{\phyrecsec}}
    \State \SendAndReceive{Encrypt}{\phyusermeas^{\phyerrfunc}_{\phyrecsec},\enkeythreshold(|\dfesr.\PartyAS|)}{\dfesr}{\msg{encrypted}{\handle}}
    \State \textbf{erase} $\phyusermeas^{\phyerrfunc}_{\phyrecsec}$ \textbf{and} \Send{Add}{\handle}{\tbbidealfunc}
    \EndIf
    \end{receive}
    \begin{receive}{ExposureCheck}{}
      \State \SendAndReceive{ExposureCheck}{\enuser}{\enidealfunc}{\rho_{\enuser}}
      \If{$\rho_{\enuser} \neq \enerrmsg$}{} \Return$\rho_{\enuser}$\EndIf
    \end{receive}

    \begin{receive}{RegisterAnalystAccept}{\enanalfunc,\enanalyst}
      \State \Send{KeyShareGen}{\aggregsfunc_{\enanalfunc},\enanalyst}{\dfesr}
      \State \Send{RegisterAnalystAccept}{\enuser,\enanalfunc}{\enanalyst}
    \end{receive}
    \underline{Analyst $\enanalyst$:}
    \begin{receive}{RegisterAnalyst}{\enanalfunc,\enanalyst}
    \State \Send{Setup}{}{\dfesr}
    \ForAll{exposed $\enuser$}{} \Send{RegisterAnalystRequest}{\enanalfunc,\enanalyst,\enuser}{\party}\EndFor
    \end{receive}
    \begin{receive}{Analyse}{\enanalfunc}
      \State \SendAndReceive{Retrieve}{}{\tbbidealfunc}{\tbbrec}
      \State \SendAndReceive{Encrypt}{(|\tbbrec|,0)}{\dfesr}{\handle_n}
      \State \SendAndReceive{Decrypt}{\handle_n,\aggregsfunc_{\enanalfunc}}{\dfesr}{\msg{decrypted}{|\tbbrec|}}
      \For{$\handle\in\tbbrec$}
      \State \SendAndReceive{Decrypt}{\handle,\aggregsfunc_{\enanalfunc}}{\dfesr}{\msg{decrypted}{\feout}}
      \EndFor
      \If{$\feout\neq\bot$}{} \Return$\msg{Decrypted}{\enanalfunc,\party,\feout}$\EndIf
    \end{receive}
\end{protocol}


\subsection{Proof of security}
We now show that $\glassvault$ is secure:
\begin{theorem}\label{thm::glassvault}
  Let
$\enriskfunc,\enerrorfuncset,\enfakingfuncset,\enfakingfuncset^{+},\enleakfunc,\enleakfunc^{+},\enanalfuncset,\enkeythreshold,$ and $\PartyC$
  be parameters such that the following conditions
  hold: 
  (1) $\enfakingfuncset \subset \enfakingfuncset^{+}$,
  (2) for every input $x$,
  it holds that\footnote{Note, if $x$
    is a labelled dictionary and $\enleakfunc$ returns a dictionary which
    includes a subset of entries in $x$ and optionally any other additional
    records, $\enleakfunc^{+}$ strictly returns more records than $\enleakfunc$.
    }
    $\enleakfunc(x)=\enleakfunc(\enleakfunc^{+}(x))$, and
 (3) there is a function
  $\phi^{+}\in\enfakingfuncset^{+}$ such that for every input $x$, every noisy
  physical reality record $\ennoisyrec\in x$, every function
  $\phi\in\enfakingfuncset$, and every set of users $\enuser$,
    (3.1) $\phi(\ennoisyrec)$ does not tamper with
      sensitive data record  $\ennoisyrec[\enuser][\phyrecsec]$ but $\phi^{+}(\ennoisyrec)$
      does, and
    (3.2)
      $\enleakfunc(x)$ does not contain any instruction to leak the contents of $\ennoisyrec[\enuser][\phyrecsec]$ but
      $\enleakfunc^{+}(x)$ does.
%
 Then, it holds that
\[\glassvault[\enriskfunc,\enerrorfuncset,\enfakingfuncset,\enleakfunc,\enanalfuncset,\enkeythreshold,\PartyC] \textit{ UC-realises } \enplusideal[\enriskfunc,\enerrorfuncset,\enfakingfuncset^{+},\enleakfunc^{+},\enanalfuncset,\enkeythreshold],  \]
in the presence of global functionalities $\timeidealfunc$ and $\phyidealfunc$.


%
\end{theorem}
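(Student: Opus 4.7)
The plan is to prove Theorem~\ref{thm::glassvault} by constructing a simulator $\simglassvault$ that runs internal copies of the exposure notification simulator from \cite{EPRINT:CKLRSSTVW20} and of $\dfesrsim$ from Theorem~\ref{thm::ddfesr}, and uses $\enplusideal$ as its only oracle. All communication between $\glassvault$'s subroutine calls and the two internal simulators is intercepted by $\simglassvault$, which maps their ideal interfaces to the corresponding events emitted by $\enplusideal$.

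First I would specify how the simulator translates $\enplusideal$'s events. On \textsf{ActivateMobileUser}, \textsf{ExposureCheck}, and \textsf{RemoveMobileUser}, $\simglassvault$ simply relays to its internal $\enidealfunc$ simulator. On \textsf{ShareExposure} for a corrupt $\enuser$, the sensitive field $\ennoisyrec[\enuser][\phyrecsec]$ is disclosed, so $\simglassvault$ can feed the true plaintext to $\dfesrsim$ via its $\mathsf{Encrypt}$ interface; for an honest $\enuser$ it feeds instead a dummy of the correct length. On \textsf{RegisterAnalyst} and \textsf{RegisterAnalystAccept}, $\simglassvault$ drives $\dfesrsim$ through the corresponding $\mathsf{Setup}$ and $\mathsf{KeyShareGen}$ invocations, flagging the users whose acceptance should cause $\enplusideal$ to register them in $\enanalysts[\enanalfunc,\enanalyst]$. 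On \textsf{Analyse} from a corrupt analyst, $\simglassvault$ calls $\enplusideal$'s \textsf{Analyse} interface to obtain the output $\feout$ and injects it into the final $\dfesrsim$ $\mathsf{Decrypt}$ response. The consistency of this bookkeeping relies on the fact that a user only calls $\dfesr.\mathsf{Setup}$ after a successful exposure, so $|\dfesr.\PartyAS|$ in the protocol coincides with $|\ensharexprec|$ in $\enplusideal$ and the threshold expressions $\enkeythreshold(|\dfesr.\PartyAS|)$ and $\enkeythreshold(|\ensharexprec|)$ align.

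The argument then follows a short hybrid chain. Hybrid~0 is the real protocol. Hybrid~1 replaces the exposure notification sub-protocol by $\enidealfunc$ together with its dedicated simulator, which is indistinguishable by \cite{EPRINT:CKLRSSTVW20}. Hybrid~2 replaces the $\dsteel$ instance by $\dfesr$ together with $\dfesrsim$, which is indistinguishable by Theorem~\ref{thm::ddfesr}. Hybrid~3 swaps the plaintext sensitive fields sent to $\dfesr$ by honest users for dummies of equal length; indistinguishability holds because $\dfesr$ leaks only $\leakfunc$ on honest-to-honest transactions, and because the aggregator $\aggregsfunc_{\enanalfunc}$ only returns an output once the threshold of distinct key shares has been collected, a moment at which $\simglassvault$ queries $\enplusideal$'s \textsf{Analyse} interface and splices the authentic answer back in. Hybrid~4 internalises the two inner simulators and the now-vestigial plaintexts into $\simglassvault$ interacting with $\enplusideal$, completing the chain.

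The hardest step will be justifying the lifting from $(\enfakingfuncset,\enleakfunc)$ to the strictly larger $(\enfakingfuncset^{+},\enleakfunc^{+})$. A real-world adversary who replaces handles on $\tbbidealfunc$ with its own encryptions, or who exploits the analyst to learn aggregates over $\phyrecsec$, effectively tampers with and observes sensitive fields that the bare $\enfakingfuncset$ and $\enleakfunc$ do not address. Conditions~(3.1) and~(3.2) are precisely the structural requirement that every such real-world deviation is expressible through some $\phi^{+}\in\enfakingfuncset^{+}$ acting on $\ennoisyrec[\enuser][\phyrecsec]$, and every extra piece of information leaked by $\glassvault$ lies in the image of $\enleakfunc^{+}$; condition~(2), namely $\enleakfunc(x)=\enleakfunc(\enleakfunc^{+}(x))$, ensures that re-piping through the enlarged leakage function does not disturb what the exposure notification simulator already relies on. The proof will therefore devote the bulk of its effort to a case analysis over the combinations of honest and corrupted users and analysts, verifying for each that every transcript-visible tampering or leakage event in the $\glassvault$ hybrid has a matching $\phi^{+}$ or $\enleakfunc^{+}$-image that $\simglassvault$ can invoke on $\enplusideal$. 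Once this bookkeeping is in place, setup, activation, authorisation, and removal flows follow routinely from the UC composition theorem.
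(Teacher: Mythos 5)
Your overall strategy -- a single simulator that sits between $\enplusideal$ and the hybrid-world subroutines, synchronising \textsc{ShareExposure} with $\dfesr$ encryption, \textsc{RegisterAnalystAccept} with \textsc{KeyShareGen}, and \textsc{Analyse} with the sequence of aggregator decryptions, using $|\dfesr.\PartyAS|=|\ensharexprec|$ to align the thresholds -- matches the paper. However, you diverge on the central simulation step and add some unnecessary machinery. The paper's $\simglassvault$ does \emph{not} encrypt dummies for honest users: it uses the enlarged leakage function $\enleakfunc^{+}$ (condition (3.2)) to extract the genuine $\ennoisyrec[\enuser][\phyrecsec]$ from $\enplusideal$ and feeds the \emph{real} plaintexts into the emulated $\dfesr$, so that the analyst's decryption output is correct by construction and no splicing is needed; adversarial \textsc{Leak} queries are then answered by filtering the $\enleakfunc^{+}$ output back down through $\enleakfunc$, which is exactly what condition (2) licenses. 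Your dummy-plaintext-plus-output-patching route is the more standard FE-style argument and is workable in the semi-honest setting (only the final aggregator call returns a non-$\bot$ value, so only one response must be patched with the $\enplusideal.\textsc{Analyse}$ output), but it leaves unexplained why the theorem must weaken the ideal functionality to $\enleakfunc^{+}$ at all -- under your simulation the enhanced leakage is never exercised, whereas in the paper it is the very mechanism the simulator depends on. Two further mismatches: (i) $\glassvault$ is already stated in the $(\enidealfunc,\dfesr,\tbbidealfunc)$-hybrid model, so running internal copies of the exposure-notification simulator and of $\dfesrsim$ (which concerns $\dsteel$ realising $\dfesr$, handled separately by composition) is redundant -- the paper's simulator simply emulates those ideal functionalities directly and redirects messages; and (ii) the "bulk of the effort" you anticipate on a case analysis over $\phi^{+}\in\enfakingfuncset^{+}$ does not occur in the paper's proof, whose simulator only ever forwards \textsc{FakeReality} requests with $\phi\in\enfakingfuncset$; the paper confines the role of $\enfakingfuncset^{+}$ to a post-proof remark about extending to malicious users, and conditions (1) and (3.1) serve only to make the ideal functionality's parameters strictly more permissive than the real world requires.
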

\begin{proof}
  We construct a simulator $\simglassvault$ to prove Theorem \ref{thm::glassvault}. The high-level task of our simulator is to \emph{synchronise} the inputs of
  the analysis functions between the ideal world (where they are stored in
  $\phyidealfunc$), and the real world (where they are held in the $\dfesr$
  ideal
  repository). 
  The simulator updates a simulated trusted bulletin board by obtaining, through
  the leakage function, the secret data for all honest users who have shared
  their exposure, and encrypting it through $\dfesr$.

  When any registered and corrupted analyst executes an \textsc{Analyse}
  request in the ideal world, the simulator allows the ideal functionality to
  return the ideal result of the computation only if the adversary instructs the
  analyst to correctly aggregate the ciphertexts stored in the bulletin board
  through $\dfesr$ decryption requests to the appropriate aggregator function.
%
   We also simulate the \textsc{RegisterAnalyst} and
  \textsc{RegisterAnalystAccept} sequence of operations by triggering the
  corresponding \textsc{Setup} and \textsc{KeyShareGen} subroutines in $\dfesr$.
  Any other adversarial calls to $\enplusideal$ such as
  $\msg{Setup}{\epsilon^{*}}$ and $\msg{FakeReality}{\phi}$ are allowed and
  redirected to $\enidealfunc$, as long as $\epsilon^{*}\in\enerrorfuncset$ and
  $\phi\in\enfakingfuncset)$.

\begin{simulator}{$\simglassvault$}
  \begin{statedecl}

   \statevar{\enleakfunc^+}{}{function to leak anything that $\enleakfunc$ does,
     as well as the contents of $\phyrecsec$ for all users}
   \statevar{\ensimtbb}{\{\}}{Table that stores messages uploaded to the Trusted
   Bulletin Board}
    \statevar{\enuserscorr}{}{List of corrupted users}
    \statevar{\ensharexprec}{}{List of exposed users}
  \end{statedecl}
  \begin{receive}[\enplusideal]{ShareExposure}{\enuser,\phyusermeas^{\phyerrfunc}_{\phyrecsec}}
    \State \Send*[$\enuser$][sim]{\msg{Setup}{}}{$\dfesr$}
    \If{$\phyusermeas^{\phyerrfunc}_{\phyrecsec}=\bot$}
    \Comment{simulate honest user:}
      \State \SendAndReceive{Leak}{}{\enplusideal}{\mathsf{r}}
      \State $\phyusermeas^{\phyerrfunc}_{\phyrecsec} \gets \mathsf{r}[\enuser][\phyrecsec]$

    \EndIf

    \State \SendAndReceive*[$\enuser$][sim]{$\msg{Encrypt}{\phyusermeas^{\phyerrfunc}_{\phyrecsec},\enkeythreshold(|\dfesr.\PartyAS|)}$}{$\dfesr$}{$\msg{encrypted}{\handle}$}
    \State $\ensharexprec\gets\ensharexprec\concat\enuser$; $\ensimtbb \gets \ensimtbb\concat{\handle}$
  \end{receive}
  \begin{receive}[\enplusideal]{RegisterAnalyst}{\enanalfunc,\enanalyst}
    \State \Send*[$\enanalyst$][sim]{\textsc{setup}}{$\dfesr$}
    \For{$\enuser\in\ensharexprec$} \Send*[$\enanalyst$][sim]{$\msg{RegisterAnalystRequest}{\enanalfunc,\enanalyst,\enuser}$}{$\env$}\EndFor
  \end{receive}
  \begin{receive}[\enplusideal]{RegisterAnalystAccept}{\enanalfunc,\enanalyst,\enuser}
    \If{$\enuser\in\enuserscorr$}
      \State \Await for
      $\msg{KeyShareGen}{\aggregsfunc_{\enanalfunc},\enanalyst}$ from $\enuser$
      to $\dfesr$
    \Else
    \State \Send*[$\enuser$][sim]{$\msg{KeyShareGen}{\aggregsfunc_{\enanalfunc},\enanalyst}$}{$\dfesr$}
    \EndIf
    \State \Return OK
  \end{receive}
  \begin{receive}[\enplusideal]{Analysed}{\enanalfunc,\party,\feout}

      \State \Await for $\msg{Encrypt}{(|\ensimtbb|,0)}$ from $\party$ to
      $\dfesr$ and for response $\msg{encrypted}{\handle_{n}}$

      \State \Await for $\msg{Decrypt}{\handle_{n},\aggregsfunc_{ \enanalfunc}}$
      from $\party$ to
      $\dfesr$ and for response $\msg{decrypted}{|\ensimtbb|}$
      \For{$\handle\in\ensimtbb$}
      \State \Await for $\msg{Decrypt}{\handle,\aggregsfunc_{ \enanalfunc}}$
      from $\party$ to
      $\dfesr$ and for response $\msg{decrypted}{\_}$
      \EndFor
      \State \Return OK
  \end{receive}

  \begin{receive}[\adversary \text{ to }\enidealfunc]{Leak}{}
    \State \SendAndReceive{Leak}{}{\enplusideal}{\mathsf{r}}
    \State \Return $\enleakfunc(\mathsf{r})$
  \end{receive}
  \begin{receive}[\adversary \text{ to }\enidealfunc]{FakeReality}{\phi}
    \State \Assert $\phi\in\enfakingfuncset$
    \State \Send{FakeReality}{\phi}{\enplusideal}
  \end{receive}
  \begin{receive}[\adversary \text{ to }\enidealfunc]{Corrupt}{\enuser}
    \State $\enuserscorr \gets \enuserscorr \concat \enuser$
    \State \SendAndReceive{Corrupt}{\enuser}{\enplusideal}{\enanalysts_\enuser}
    \For{$(\enanalfunc,\enanalyst)\in\enanalysts_{\enuser}$}
    \State \Send*[$\enuser$][sim]{$\msg{KeyShareGen}{\aggregsfunc_{\enanalfunc},\enanalyst}$}{$\dfesr$}
    \EndFor
  \end{receive}
  \begin{receive}[\adversary \text{ to }\enidealfunc]{*}{}
   \State \Send{*}{}{\enplusideal}
  \end{receive}
  \begin{receive}[\adversary\text{ to }\tbbidealfunc]{Retrieve}{}
    \State \Return $\ensimtbb$
  \end{receive}
\end{simulator}

  We prove that the $\glassvault$ protocol is secure under a model of semi-honest
  adversarial behaviour; this means showing that $\mathsf{view}^{\env}_{\mathsf{REAL}}\approx\mathsf{view}^{\env}_{\mathsf{IDEAL}}$.

  We argue that for all messages sent by the environment, the ideal world
  simulator produces a view that is indistinguishable from $\mathsf{view}^{\env}_{\mathsf{REAL}}$. 
 In particular:
  \begin{itemize}
    \item \textsc{ShareExposure}: when a user $\enuser$ shares their exposure
          status, $\simglassvault$ is activated. If $\enuser$ is corrupted, it
          additionally receives the noisy record of $\enuser$'s sensitive data,
          $\phyusermeas^{\phyerrfunc}_{\phyrecsec}$.
         For honest users, $\simglassvault$ obtains the same sensitive records
         by using the $\textsc{Leak}$ function. 
         As in the real world, $\dfesr$ is invoked to encrypt the sensitive
          data, and the resulting handle is stored in an emulated trusted
          bulletin board. In both worlds, the array of stored handles follows a
          similar distribution as they are both generated by $\dfesr$ for the
          same messages. All other behaviour of \textsc{ShareExposure} is
          handled by $\enplusideal$ in the same way that $\enidealfunc$ would.

    \item \textsc{RegisterAnalyst}: when analyst $\enanalyst$ requests
          permission to compute a function $\enanalfunc\in\enanalfuncset$,
          the simulator registers them as a decryptor in $\dfesr$ (the same
          analyst can request to be registered multiple times, but the $\dfesr$
          functionality will ignore all but the first request). $\simglassvault$
          then emulates a request for \textsc{RegisterAnalystRequest} for all
          exposed users. 
          For the semi-honest case, when both honest and corrupted users are
          allowed by the environment to accept the request for a function, they
          will ask $\dfesr$ for a \textsc{KeyShareGen}. $\simglassvault$ learns
          abouts these \textsc{RegisterAnalystAccept} calls when either the analyst or user is
          corrupted. In the former case $\simglassvault$
          proactively sends the request to $\dfesr$ on behalf of the user, while
          in the latter it waits for the adversary to trigger the request. 
          If both user and analyst were honest, the adversary (in either world)
          should not learn that a request was granted. However, given we are in
          the dynamic user corruption setting, the simulator has to handle
          keyshare generation for newly corrupted users, by requesting keyshare
          generation to $\dfesr$ for all functions they had authorised pre-corruption.
    \item \textsc{Analyse}: for a corrupted analyst, $\simglassvault$ will
    ensure that they sync
    the state of the ideal function $\enanalfunc$ with that of the aggregated
    $\aggregsfunc_{\enanalfunc}$ in $\dfesr$. To aggregate all inputs stored in
    its emulated trusted bulletin board $\ensimtbb$, the analyst first encrypts the
    integer equal to the size of $\ensimtbb$ and passes it for decryption to
    $\aggregsfunc_{\enanalfunc}$ to initialise it. Then, for all handles stored
    in $\ensimtbb$, it also decrypts the corresponding values to the same aggregator. When all
    the decryptions have occurred, the final returned value will be the
    evaluation of $\enanalfunc$ on the sensitive data of all exposed users;
    $\simglassvault$ ignores this value and yields back to $\enplusideal$, which
    will return the ideal world result to the analyst.

    Since the inputs to the aggregator (the set of uploaded sensitive data
    to the trusted bulletin board) in the real world fully correspond to the
    input of $\enanalfunc$ in the ideal, the distributions of states and outputs
    for $\aggregsfunc_{\enanalfunc}$ in $\dfesr$
    and for $\enanalfunc$ in $\enplusideal$ are indistinguishable.

    \item \textsc{Leak}: on an adversarial request to learn some values from the combination
    of noisy record of reality, exposed users, and corrupted users,
    $\simglassvault$ obtains the corresponding leakage $\mathsf{r}$ from $\enplusideal$, and
    filters it by the admissible leakage $\enleakfunc$ for $\enidealfunc$.
    \item \textsc{FakeReality}: $\simglassvault$ ensures that the request to
    modify the noisy record of reality through $\enplusideal$ is also admissible
    in the real world with $\enidealfunc$.
  \end{itemize}
  All other messages are handled by redirecting them from
  $\enplusideal$ to $\enidealfunc$, since both functionalities behave in the
  same manner outside the cases we have already outlined.
%
%
  \hfill\(\Box\)
\end{proof}

In the rest of this section, we make a few additional remarks. At a high level, the Glass-Vault protocol can support any computation in a
privacy-preserving manner, in the sense that nothing beyond the computation
result is revealed to the analyst; more formally, in the simulation-based model,
a corrupted party’s view of the protocol execution can be simulated given only
its input and output. The Glass-Vault protocol can be considered as an
\emph{interpreter} that takes a description of any multi-input functionality
along with a set of inputs, executes the functionality on the inputs and returns
only the result to a potentially semi-honest analyst.

  While the above simulator guarantees security in the semi-honest setting, it
  is possible to design a simulator that allows corrupted parties to diverge
  from the protocol. In particular, this simulator would need to handle the case
  of a malicious user who encrypts via $\dfesr$ dishonestly generated data (in
  that it does not match with the corrupted user's physical reality
  measurements). Following the lead of \citet{EPRINT:CKLRSSTVW20}, we can
  account for these malicious ciphertexts by using functions in
  $\enfakingfuncset^{+}$ to modify the noisy record of physical reality in
  $\enplusideal$. Note that this simulation strategy imposes additional deviations
  from the physical reality beyond those unavoidably inherited by $\enfakingfuncset$
  due to its own simulation needs. This makes it harder to justify the usage of the
  protocol by an analyst who is interested in the correctness of the data
  processing; hence, our choice of adversarial model.


  Recall that the primary reason the Glass-Vault protocol offers
  ``transparency'' is that it lets users have a chance to decide which
  computation should be executed on their sensitive data. This is of particular
  importance because the result of any secure computation (including functional
  encryption) would reveal some information about the computation's inputs.
  However, having such an interesting feature introduces a trade-off: if many
  users value their privacy and decide not to share access to their data, the
  analyst may not get enough to produce any useful results, foregoing the
  collective benefits this kind of data sharing can engender
  \cite{schwartz1997privacy}. One way to solve such a dilemma would be to
  integrate a (e.g., blockchain-based) mechanism to incentivise users to grant
  access to their data for such computations; however, even then careful
  considerations are required, as the framing of why a user is asked to disclose
  their information can impact how much they value privacy \cite{acquisti2013privacy}.



\subsection{Cost evaluation}

In the $\glassvault$ protocol, an infected user's computation and communication
complexity for the proposed data analytics purposes is independent of the total
number of users, while it is linear with the number of functions requested by
the analysts.
An analyst's computation overhead depends on each function's complexity and the
number of decryptions (as each decryption updates the function's state). The
cost to non-infected users is comparable to the most efficient $\enscheme$ protocols that
realise $\enidealfunc$ (such as the protocol in~\cite[Section
8.5]{EPRINT:CKLRSSTVW20}): besides passively collecting measurements
of sensitive data, no other data-analytics operation is required until the
\textsc{ShareExposure} phase.

While the costliest component of the protocol is the functional encryption
module, it is possible to build an efficient implementation of $\glassvault$ due
to the construction of $\dsteel$, which relies on efficient operations
facilitated by trusted hardware.





\section{Example: infections heatmap}\label{sec::heatmap}
\newcommand{\heatmapfunc}{\mathsf{Heatmap}}
\newcommand{\heatmapsize}{k}
\newcommand{\heatmapmin}{q}
\newcommand{\heatmapkept}{n}
\newcommand{\heatmapparam}{\heatmapsize,\heatmapmin}
\newcommand{\heatmapinp}{u}
\newcommand{\heatmapoutp}{\vec{out}}
\newcommand{\heatmapstate}{\mathsf{state}}
\newcommand{\heatmapmap}{\mathsf{m}}
\newcommand{\heatmapbuf}{\mathsf{b}}
\newcommand{\heatmaptime}{\vec{t}}

In this section, we provide a concrete example of a computation that a
$\glassvault$ analyst
 can perform:
%
%
 generating a daily heatmap of the current clusters of infections. This is an
 interesting application of $\glassvault$, as it relies on collecting
 highly-sensitive location information from infected individuals. 
 

%
$\heatmapfunc_{\heatmapparam}(\fein,\festate)$ is defined as a multi-input stateful
function, parametrised by $\heatmapsize$: the number of distinct cells we divide
the map into, and $\heatmapmin$: the minimum number of exposed users that 
have shared their data.
%
 %
 The values of
these parameters affect the granularity of the results, computational costs,
and privacy of the exposed users.   Thus, they need to be approved as part of
the \textsc{KeyShareGen} procedure (in that the parameters' values are hardcoded
in the $\heatmapfunc$ program, so that different parameter values require
different functional keys).

Given the full set of exposed users' sensitive data, the $\heatmapfunc$ function
filters it to the exposed users' location history for the last $\incubt$ days
(the maximum number of days since they might have been spreading the virus due
to its incubation period), and constructs a list of
$\incubt \times \heatmapsize$ matrices, where an entry in each matrix
$\heatmapinp$ contains the number of hours within a day an infected individual
spent in a particular location. Location data is collected once every hour by
the user's phone, and divided into $\heatmapsize$ bins. The $\heatmapinp$ matrix
rows are in reverse chronological order, with the last row of the matrix
corresponding to the locations during the most recent day and each row above in
decreasing order until the first row which contains the locations $\incubt$ days
ago.


%
$\heatmapfunc$ maintains as part of its state a list, $\heatmapmap$, of
$\incubt$-sized circular buffers (a FIFO data structure of size $\incubt$; once
more than $\incubt$ entries have been filled, the buffer starts overwriting data
starting from the oldest entry). On every call with input $\fein$, the function
allocates a new circular buffer $\heatmapbuf$ for each matrix $\heatmapinp$ it
constructed from $\fein$, and assigns each of $\heatmapinp$'s rows to one of
$\heatmapbuf$'s $\incubt$ elements, starting from the top row. Each element in
$\heatmapbuf$ now contains a list of $\heatmapsize$ geolocations for a specific
day, with the first element containing the locations $\incubt$ days ago, and so
on. For any buffer already in $\heatmapmap$, we append a new zero vector,
effectively erasing the record of that user's location for the earliest day. If
there is a buffer that is completely zeroed out by this operation, we remove it
from $\heatmapmap$.

If we  have $|\heatmapmap|\geq\heatmapmin$, we return
the row-wise sum of vectors
$\sum\limits_{\scriptscriptstyle\heatmapbuf\in\heatmapmap}\sum\limits_{\scriptscriptstyle i=0}^{\scriptscriptstyle \incubt-1} \heatmapbuf[i]$.
The result is a single $\heatmapsize$-sized vector containing the total number of
hours spent by all users within the last $\incubt$ days: our heatmap. The full
pseudocode for the function is provided in Supporting material~\ref{app::heat}.
For the results' correctness, an analyst should run the function (through a
decryption operation) once a day. As the summation of user location vectors is a
destructive operation, the probability that a malicious analyst can recover any
specific user's input will be inversely proportional to $\heatmapmin$.

While the security proof of Theorem \ref{thm::glassvault} is in the semi-honest
setting, a fully malicious setting is not unrealistic, as
a user can tamper with their own client applications to upload malicious data
(terrorist attack \cite{EPRINT:Vaudenay20b}). Since there is no secure pipeline
from the raw measurements from sensors to a specific application, unless we
adopt the strong requirement that every client also runs a TEE (as
in~\cite{cryptoeprint:2020:1546}), it is impossible to certify that the users'
inputs are valid. Like most other remote computation systems, $\glassvault$
cannot provide blanket protection against this kind of attack. However, due
to its generality, $\glassvault$ allows analysts to use functions that include
``sanity checks'' to ensure that the data being uploaded are at least sensible,
in order to limit the damage that the attack may cause. In the heatmap case, one
such check could be verifying that for each row of $\heatmapinp$, it must hold
that its column-wise sum is equal to 24, since each row represents the number of
hours spent across various locations by the user in a day (assuming the user's
phone is on and able to collect their location at least once an hour throughout
a day). To capture this type of attack in the ideal functionality
$\enplusideal$, we instantiate it with a \textsc{FakeReality} function in
$\enfakingfuncset^{+}$ such that, if a malicious user $\enuser$ uploads this
type of fake geolocation, it will update $\enuser$'s position within the noisy
record of physical reality to match $\enuser$'s claimed location, while making
sure that other users who compute risk exposure and have been in close contact
with $\enuser$ will still be notified.

We highlight that \citet{EPRINT:BHKRW20} propose an ad-hoc scheme that produces
similar output. Their scheme relies on combining infection data provided by
health authorities with the mass collection of cell phone location data from
mobile phone operators. Unlike $\glassvault$, with its strong level of
transparency, the approach in~\cite{EPRINT:BHKRW20} does not support any
mechanism that allows the subjects of data collection to provide their direct
consent and opt-out of the computation.
\section{Future directions}\label{sec::future}



There are several possible directions for future research. An immediate goal
would be to implement $\glassvault$ for various data analytics, examine their
run-time, and optimise the system's bottlenecks. Since $\glassvault$'s analytics
results could influence public policy, it is interesting to investigate how this
platform could be equipped with mechanisms that allow result recipients to
verify the authenticity of outputs provided by the analyst. Another appealing
future research direction is to investigate the design of privacy-preserving
contact graph analysis as an application of $\glassvault$, which would let an
analyst construct a contact (sub)graph, using users' data, with minimum leakage
and maximum transparency, bridging the gap between centralised and
decentralised contact tracing.
%
%



\section*{Acknowledgments}

Lorenzo Martinico was supported by the National Cyber
Security Centre, the UK Research Institute in Secure Hardware and Embedded Systems
(RISE).  Aydin Abadi was supported in part by The National Research Centre on Privacy, Harm Reduction and Adversarial Influence Online (REPHRAIN), under UKRI grant: EP/V011189/1. 

\bibliography{abbrev2,crypto,extra}

\appendix

\section{Syntax of Functional Encryption}\label{app::fe}

  $\fe$ is defined over a class of functions
  $\Ffamily=\{\funcF\mid \funcF : \fedomain \rightarrow \ferange\}$, where $\fedomain$ is the domain and $\ferange$ is the range, consisting of the following
  algorithms:
\begin{itemize}
\item \underline{$\fes$} (run by $\PartyC\in\PartyCS$). It takes a security parameter $1^{\secp}$ as input and outputs a master keypair $(\mpk,\msk)$.
\item  \underline{$\fekg$} (run by $\PartyC\in\PartyCS$). It  takes $\msk$ and a function's description $\funcF \in \Ffamily$ as inputs
 and outputs functional key~$\funcKeyF$.
\item  \underline{$\fee$} (run by $\PartyA\in\PartyAS$). It takes a plaintext string $\fein \in \fedomain$ and $\mpk$ as inputs. It returns
 a ciphertext $\ct$ or an error.
\item  \underline{$\fed$} (run by $\PartyB\in\PartyBS$). It takes
ciphertext $\ct$ and functional key $\funcKeyF$ as inputs and returns a value $\feout\in\ferange$.
\end{itemize}

Informally, correctly evaluating the decryption operation on a ciphertext
$\ct \gets \fee(\mpk,\fein)$ using functional key $\funcKeyF$ should result in
$\feout \gets \funcF(\fein)$. The party $\PartyB$ should not learn anything
about $\PartyA$'s input, except for any information that $\feout$ reveals
about $\fein$ and some natural leakage from the ciphertext,
e.g., the length of the ciphertext.


\section{Background Functionalities}\label{app::func}
In this section, we provide an overview of the functionalities that are invoked
in the description of $\dsteel$, $\enplus$, and $\glassvault$ (presented in
Subsections~\ref{subsec::DD-Steel}, \ref{sec::enplus}, and
\ref{subsec::glassvault}, respectively). The overview is sufficient for the
clarification of the communication interface among said functionalities and the
interacting entities. For a detailed description of each background
functionality, we refer readers to the relevant work, except the exposure
notification functionality $\funct{\enscheme}$, which we present in detail as
pseudocode to allow an easy comparison with the extended functionality
$\enplusideal$.

\subsection{The global attestation functionality $\Gatt$}\label{app::func_Gatt}
The ideal functionality $\Gatt$ was introduced in~\cite{EC:PasShiTra17} and can be seen as an abstraction for a broad class of attested execution
processors. The functionality operates as follows:
\begin{itemize}
\item On message \textsc{initialize} from a party $\party$, it generates a pair of signing and verification keys $(\mathsf{spk},\mathsf{ssk})$. It stores $\mathsf{spk}$ as the master verification key $\gvk$, available to enclave programs, and $\mathsf{ssk}$ as the master secret key $\mathsf{msk}$, protected by the hardware. 
\item On message \textsc{getpk} from a party $\party$, it returns $\gvk$.
  \item On message $(\textsc{install},\idx,\mathsf{prog})$ from a (registered
        and honest) party $\party$, it asserts that $\idx$ corresponds to the
        calling party's session id. Then, it creates a unique enclave identifier
        $\mathsf{eid}$ and establishes a software enclave for
        $(\mathsf{eid},\party)$ as $(\mathsf{idx},\mathsf{prof},\emptyset)$. It
        provides $\party$ with $\mathsf{eid}$.
  \item On message $(\textsc{resume},\mathsf{eid},\mathsf{input})$ from a
        (registered) party $\party$, it calls the enclave
        $(\mathsf{idx},\mathsf{prog},\mathsf{mem})$ for $(\mathsf{eid},\party)$,
        where $\mathsf{mem}$ is the current memory state. It runs
        $\mathsf{prog}(\mathsf{input},\mathsf{mem})$ which returns
        $\mathsf{output}$ and an update memory state $\mathsf{mem}'$. Finally,
        it produces a signature $\sigma$ on
        $(\mathsf{idx},\mathsf{eid},\mathsf{prog},\mathsf{output})$ using
        $\mathsf{msk}$ and sends $(\mathsf{output},\sigma)$ to $\party$.
\end{itemize}

\subsection{The certification functionality $\funccert$}\label{app::func_Fcert}
We assume the existence of an ideal certification functionality $\funccert$, inspired by the certification functionality and the certification authority functionality introduced in~\cite{DBLP:conf/csfw/Canetti04}. The difference between $\funccert$ and the certification functionality in~\cite{DBLP:conf/csfw/Canetti04} is that (i) instead of taking over signature verification, $\funccert$ allows the verifier to verify the validity of a signature offline, and (ii) it allows the generation of only one certificate (signature) for each party.
 
In particular, the functionality $\funccert$ exposes methods $\certg$ and $\certs$. On the first call to $\certg$ via a message $\textsc{GetK}$ from a party $\party$, it initialises
an empty record and generates a signing keypair for signature scheme
$\sigscheme$, returning the verification key $\certvk$ on all subsequent calls to $\certg$. In a call to $\certs$, the input is a message $(\textsc{Sign},\mathsf{vk})$ from a party $\party$. The functionality checks that no other message has been
recorded from the same UC party id as $\party$. If this holds, then it returns the certificate $\certout$, which
is a signature on $\mathsf{vk}$ under the generated signing key. 

\subsection{The common reference string functionality $\CRS$}\label{app:func_crs}
The $\CRS$ functionality, as described in~\cite{PKC:BKMT21}, is parameterised by a distribution~$D$. On the first request message $\textsc{Get}$ from a party $P$, the functionality samples a CRS string $\mathsf{crs}$ from $D$ and sends $\mathsf{crs}$ to $P$. On any subsequent message $\textsc{Get}$, it returns the same string $\mathsf{crs}$.

\subsection{The secure channel functionality $\SC_R^S$}\label{app::func_sc}
We adopt the functionality $\SC_R^S$ from~\cite{PKC:BKMT21} that models a secure channel between sender $S$ and receiver $R$. The functionality keeps a record $M$ of the length of messages that are being sent. On message $(\textsc{send},m)$ from $S$, it sends $(\textsc{sent},m)$ to $R$ and appends the length of $m$ to the record $M$. The adversary is not activated upon
sending, but can later on request the record $M$. 
For simplicity, when the identity of the receiver or the sender is obvious, we will use the notation $\SC^S$ or $\SC_R$, respectively.

\subsection{The repository functionality $\REP$}\label{app::func_rep}
We relax the functionality $\REP$ from~\cite{PKC:BKMT21} by allowing any party to read/write, as long as the read/write request refers to some specified session. Namely, the functionality keeps a table $M$ of the all the messages submitted by writing requests. On message $(\textsc{write},x)$ from $W$, it runs the subroutine $\getHandle$ to obtain an identifying handle $\handle$, and records $x$ in $M[\handle]$. On message $(\textsc{read},\handle)$ from a party $\party\in\mathbf{R}$, it returns $M[\handle]$ to $P$.

\subsection{The time functionality $\timeidealfunc$}\label{app::func_time}
The time functionality $\timeidealfunc$ of~\cite{EPRINT:CKLRSSTVW20} can be used
as a clock within a UC protocol. It initialises a counter $t$ as $0$. On message
$\textsc{increment}$ from the environment, it increments $t$ by $1$. On message
$\textsc{time}$ from a party $\party$ , it sends $t$ to $\party$.

\subsection{The physical reality functionality $\phyidealfunc$}\label{app::func_phys}
Functionality $\phyidealfunc$ introduced in~\cite{EPRINT:CKLRSSTVW20},
represents the ``physical reality'' of each participant to a protocol, meaning
the historical record of all physical facts (e.g., location, motion, visible
surroundings) involving the participants.

\par\noindent $\phyidealfunc$ is parameterised by a validation predicate $V$ for
checking that the records provided by the environment are sensible, and a set
$\mathbf{F}$ of ideal functionalities that have full access to the records
obtained by $\phyidealfunc$. The functionality only considers records that have
a specific format and include the party identity, time, and the types of
measurement (e.g., location, altitude, temperature, distance of the party from
each other party, health status) that evaluate the physical reality for the said
party. It initialises a list $R$ of all submitted records that are in correct
format and operates as follows:
\begin{itemize}
\item On message $(\party,v)$ from the environment, where $\party$ is a party's
identity and $v$ is a record in correct format, it appends $(\party,v)$ to $R$.
Then, it sends $\textsc{time}$ to $\timeidealfunc$ (the time functionality presented
in~\ref{app::func_time}) and obtains $t$. It checks that $t$ matches the time entry in $v$ and that $V(R)$ holds. If any check fails, then it halts.
\item On message $(\textsc{MyCurrentMeas},\party,L,e)$ that comes from either party $\party$ or a functionality in $\mathbf{F}$ (otherwise, it returns an error), where $L$ is a list of fields that refer to the correct record format and $e$ is an error function:
\begin{enumerate}
\item It finds the latest entry $v$ in the sublist of entries in $R$ whose first element is $P$.
\item It sets $v_L$ as the record $v$ restricted to the fields in $L$.
\item It computes $e(v_L)$, i.e., the result of applying the error function $e$ to $v_L$.
\item It returns $e(v_L)$.
\end{enumerate}
\item On message $(\textsc{AllMeas},e)$ from a functionality in $\mathbf{F}$, it applies $e$ to each record in $R$ and obtains $\tilde{R}$. It returns $\tilde{R}$.
\end{itemize}

\subsection{The exposure notification functionality $\enidealfunc$}\label{app::func_en}
The Exposure Notification functionality, also introduced
in~\cite{EPRINT:CKLRSSTVW20}, builds on the previous two functionalities to
provide a mechanism for warning people who have been exposed to infectious
carriers of the virus. The description of the functionality is recapped in
section \ref{sec::enplus}; we show the formal description for the purposes of
comparing this functionality with $\enplusideal$.


Confirmation of test results when sharing exposure and re-registration into the
system for no longer infectious users is not captured by the functionality.

\vspace{1mm}

\begin{functionality}{$\enidealfunc[\enriskfunc,\enerrorfuncset,\enfakingfuncset,\enleakfunc]$}
  \begin{statedecl}
    \statevar{\ensharexprec}{}{List of users who have shared their exposure status}
    \statevar{\enusers}{}{List of active users}
    \statevar{\enuserscorr}{}{List of corrupted users}
    \statevar{\ennoisyrec}{}{Noisy record of physical reality}
  \end{statedecl}
  \begin{receive}[\adversary]{Setup}{\epsilon^{*}}
    \State \Assert $\epsilon^{*} \in \enerrorfuncset$; \hspace{1mm}
     $\ennoisyrec \gets \emptyset$ 
  \end{receive}
  \begin{receive}{ActivateMobileUser}{\enuser}
    \State $\enusers \gets \enusers \concat \enuser$
    \State \Send{ActivateMobileUser}{\enuser}{\adversary}
  \end{receive}
  \begin{receive}{ShareExposure}{\enuser}
    \State \SendAndReceive{AllMeas}{\epsilon^{*}}{\phyidealfunc}{\ennewnoisyrec}
    \State $\ennoisyrec \gets \ennoisyrec \concat \ennewnoisyrec$ 
    \If{$\ennoisyrec[\enuser][\phyrecinf] = \bot$}
    \State \Return$\enerrmsg$
    \Else

      \State \SendAndReceive{time}{}{\timeidealfunc}{\mathit{t}}
      \State $\ensharexprec \gets \ensharexprec \concat (\enuser,\mathit{t})$
      \State $\enusers \gets \enusers \setminus \{\enuser\}$
      \State \Send{ShareExposure}{\enuser}{\adversary}
    \EndIf
  \end{receive}
  \begin{receive}{ExposureCheck}{\enuser}
    \If{$\enuser \in \enusers$}
    \State \SendAndReceive{AllMeas}{\epsilon^{*}}{\phyidealfunc}{\ennewnoisyrec}
    \State $\ennoisyrec \gets \ennoisyrec \concat \ennewnoisyrec$ 
    \State $\mu \gets \ennoisyrec[\enuser] \concat \ennoisyrec[\ensharexprec]$
      \State \Return $\enriskfunc(\enuser,\mu)$
    \Else{}
      \Return$\enerrmsg$
    \EndIf
  \end{receive}
  \begin{receive}{RemoveMobileUser}{\enuser}
      \State $\enusers \gets \enusers \setminus \{\enuser\}$
  \end{receive}
  \begin{receive}[\adversary]{Corrupt}{\enuser}
    \State $\enuserscorr \gets \enuserscorr \concat \enuser$
  \end{receive}
  \begin{receive}[\adversary]{MyCurrentMeas}{\enuser,\phyrecfields,\phyerrfunc}
    \If{$\enuser \in \enuserscorr$}
    \State \SendAndReceive{MyCurrentMeas}{\enuser,\phyrecfields,\phyerrfunc}{\phyidealfunc}{\phymeasret}
    \State \Send{MyCurrentMeas}{\phymeasret}{\adversary}
    \EndIf
  \end{receive}
  \begin{receive}[\adversary]{FakeReality}{\phi}
    \If{$\phi \in \enfakingfuncset$}
      \State $\ennoisyrec \gets \phi(\ennoisyrec)$
    \EndIf
  \end{receive}
  \begin{receive}[\adversary]{Leak}{}
    \State \Send{Leak}{\enleakfunc(\{\ennoisyrec, \enusers, \ensharexprec\})}{\adversary}
  \end{receive}
  \begin{receive}[\environment]{IsCorrupt}{\enuser}
    \State \Return $\enuser \stackrel{?}{\in} \enuserscorr$
  \end{receive}
\end{functionality}

\subsection{The trusted bulletin board functionality $\tbbidealfunc$}\label{app::func_tbb}
The functionality $\tbbidealfunc$, as presented in~\cite{EPRINT:CKLRSSTVW20},
maintains a state that is updated whenever new data are uploaded (for infectious
parties). It initializes a list $C$ of records. On message $(\textsc{Add},c)$
from a party $\party$, it checks with $\phyidealfunc$ whether $\party$ is
infectious (formally, $\tbbidealfunc$ sends a message
$(\textsc{MyCurrentMes},\party,\mbox{ ``\textsf{health\_status}''},\mathsf{id})$
to $\phyidealfunc$, where $\mathsf{id}$ is the identity function). If this
holds, then it appends $c$ to $C$. On message \textsc{Retrieve} from a party
$\party$, it returns $C$ to $P$.


\section{Heatmap pseudocode}\label{app::heat}

In this section, we provide the pseudocode for the heatmap function discussed in Section \ref{sec::heatmap}. 
For simplicity of exposition, we assume that the input $\fein$ is already
a fully formed list of $\incubt \times \heatmapsize$ matrices containing a
single user's location data over the last $\incubt$ days. While $\glassvault$
functionalities typically expect a subset of $\phyidealfunc$'s noisy record of
reality for fields in $\phyrecsec$, turning those records in a list of matrices
can be delegated to the aggregator run by $\glassvault$ to turn individual
user's ciphertext into the multi-input list $\fein$.

\begin{algorithmic}
  \Function{$\heatmapfunc_{\heatmapparam}$}{$\fein,\heatmapstate$}
    \If{$\heatmapstate = \initstate$} $\heatmapmap \gets []$ \EndIf
    \For{$c \in \heatmapmap$}
        \State $c \gets c \concat \vec{0}$
        \If{$\forall i \in c: i = \vec{0}$} $\heatmapmap \gets \heatmapmap \setminus c$\EndIf
    \EndFor
    \For{$\heatmapinp\in\fein$}
        \State $b \gets \mathsf{CircularBuffer}(\incubt)$
        \For{$\{i=0;i<\incubt;i\plpl\}$}
          \State \Assert
          $\sum\limits_{\scriptscriptstyle j=0}^{\scriptscriptstyle \heatmapsize-1} \heatmapinp[i,j] = 24$
          \State $b \gets b \concat \heatmapinp[i,:]$
        \EndFor
        \State $\heatmapmap \gets \heatmapmap \concat b$
    \EndFor
    \State $\feout \gets \vec{0}$
    \If{$|\heatmapmap|\geq\heatmapmin$}
        \For{$\heatmapinp\in\heatmapmap$}\For{$\{i=0;i<\incubt;i\plpl\}$}\State $\feout \gets \feout + \heatmapinp[i,:]$\EndFor\EndFor
    \EndIf
    \State \Return $\feout$
  \EndFunction
\end{algorithmic}

The above pseudocode uses the following notation conventions:

\begin{itemize}
  \item Given matrix $z$, the notation $z[i,j]$ denotes accessing the
  $i$-th row and $j$-th column of $z$.
  \item $z[i,:]$ denotes the row vector corresponding to the $i$-th row of $z$;
  $z[:,j]$ is the column vector corresponding to the $j$-th column
  \item We denote by $\mathsf{CircularBuffer}(n)$ the creation of a new
  $n$-sized circular buffer. Appending an item to the buffer is accomplished
  through concatenation operator $\concat$. After $n$ items have been appended
  to a buffer, it will overwrite the first record in the buffer, and so on

\end{itemize}


\section{Steel simulator}\label{app::steel-sim}
\label{app:steelsim}
\vspace{-2mm}
We now describe the simulator presented in \cite{PKC:BKMT21}, while adapting the
message syntax to fit with the messages sent by $\dfesr$ and $\dsteel$.

\clearpage

\newcommand{\func}{\dfesrsim}

\begin{simulator}{$\IRONsim$[$\pke,\sigscheme,\nizkscheme,\secp,\Ffamily$]}
\vspace{-6mm}
  \begin{statedecl}
    \statevar{\simrepo[\cdot]}{\emptyset}{Table of ciphertext and handles in public repository}
    \statevar{\Flist}{[]}{List of $\prog{\FE^{\certvk}[\funcF]}$ enclaves and their $\eid{F}$}  
    \statevar{\gattsent}{\{\}}{Collects all messages sent to $\Gatt$ and its response}
    \statevar{\gattsigned}{\{\}}{Collects all messages signed by $\Gatt$}
    \statevar{(\ncrs,\trap)}{\nizkscheme.\Sim_1}{Simulated reference string and trapdoor}
  \end{statedecl}

 \noindent
\begin{receive}[\func]{setup}{\party}
  \If{$\mpk = \bot$}
  \State  $\eid{\KME} \gets \Gatt.\install(\PartyC.\sid, \prog{\KME^{\certvk}})$
  \State  $(\mpk, \irsig{\KME}) \gets \Gatt.\resume(\eid{\KME}, \initkw, \ncrs, \PartyC.\sid)$
  \EndIf

  \If{$\party = \PartyA$}
    \State  \Send{setup}{\mpk,$\hl{$\irsig{\KME}$}$}{\SC_{\PartyA}}
  \ElsIf{$P = \PartyB$}
	  \State \SendAndReceive{setup}{\mpk,$\hl{$\irsig{\KME}$}$, \eid{\KME}}{\SC^{\PartyC}_{\PartyB}}{\msg{provision}{\irsig{}, \eid{\DE}, \pkd}}
    \State \Assert{$(\PartyC.\sid, \eid{\DE}, \prog{\DE^{\certvk}}, \pkd) \in \gattsigned[\irsig{}]$}
	  \State  $(\ctsk, \irsig{\KME}) \gets \Gatt.\resume(\eid{\KME}, (\provisionkw,
    (\irsig, \eid{\DE}, \pkd, \eid{\KME},\ncrs))))$
	  \State  \Send{provision}{\ctsk, \irsig{\KME}}{\SC_{\PartyB}}
  \EndIf
  \end{receive}
%
%
\begin{receive}[ \text{party } \PartyB \text{ to } \REP]{read}{\handle}
  \State \SendAndReceive{decrypt}{\leakfunc,\handle}{\func \text{ on behalf of }
    \PartyB}{(\textsc{decrypted},|(\pkemsg,$\hl{$\lenkeyshare$}$)|)}
    \State \Assert{$|(\pkemsg,$\hl{$\lenkeyshare$}$)| \neq \bot$}
    \State $\ct \gets \pke.\pkee(\mpk, 0^{|(\pkemsg,\lenkeyshare)|})$
    \State $\pi \gets \nizkscheme.\Sim_2(\ncrs,\trap,(\mpk,\ct))$
    \State $\ctmsg \gets (\ct,\pi)$; $\simrepo[\ctmsg]\gets h$
    \State \Send{read}{\ctmsg}{\PartyB}
  \end{receive}
  \begin{receive}[ \text{party } \party\in\{\PartyBS\cup\PartyC\} \text{ to } \Gatt]{install}{\idx,\prog{}}
    \State  $\eid{} \gets \Gatt.\install(\idx, \prog{})$
    \State $\gattsent[\eid{}].\install \gets (\idx,\prog{},$\hl{$\party$}$)$
    \Comment{$\gattsent[\eid{}].install[1]$ is the program's code}
    \State \Forward $\eid{}$ to $\PartyB$
  \end{receive}
  \begin{receive}[ \text{party } \party\in\{\PartyBS\cup\PartyC\} \text{ to } \Gatt]{resume}{\eid{}, \inp}
    \State \Assert{$\gattsent[\eid{}]$\hl{$.\install[2]=\party$}}

  \If{$\gattsent[\eid{}].\install[1] \neq \prog{\FE^{\certvk}[\cdot]} \lor (\inp[0] \neq \runkw \lor \inp[-1]\neq\bot)$}

      \State  $(\outp, \irsig{}) \gets \Gatt.\resume(\eid{}, \inp)$
      \State $\gattsent[\eid{}].\resume \gets \gattsent[\eid{}].\resume \concat (\irsig{},\inp,\outp)$
      \State $\gattsigned[\irsig{}] \gets (\gattsent[\eid{}].\install[0], \eid{}, \gattsent[\eid{}].\install[1], \outp)$
      \State \Forward $(\outp, \irsig{})$ to $\party$
    \Else
      \State $(\idx, \prog{\FE^{\certvk}[\funcF]},$\hl{$\party$}$) \gets \gattsent[\eid{}].\install$
      \State $(\runkw, \irsig{\DE},\eid{\DE}, \ctsk, \ctmsg,$\hl{$\lenkeysharef$}$,\ncrs,\bot) \gets \inp$
      \State \Assert{$(\irsig{\funcF},(\initkw,\mpk,\idx), (\pkf)) \in \gattsent[\eid{}].\resume$}
      \State \Assert{$(\idx, \eid{}, \prog{\FE^{\certvk}[\funcF]}, \pkf) \in \gattsigned[\irsig{\funcF}] $}
      \State \Assert{$(\idx, \eid{\DE}, \prog{\DE^{\certvk}}, \ctsk,$\hl{$\lenkeysharef$}$,\ncrs)) \in \gattsigned[\irsig{\DE}]$}
      \Comment{If the ciphertext was not computed honestly and saved to $\simrepo$}
      \If{$\simrepo[\ctmsg] = \bot$}
      \State $(\ct,\pi) \gets \ctmsg$
      \State $((\pkemsg,$\hl{$\lenkeyshare$}$),\pkerc) \gets \nizkscheme.\Extr(\trap,(\mpk,\ct),\pi)$
      \State \SendAndReceive{encrypt}{\pkemsg,$\hl{$\lenkeyshare$}$}{\func \text{ on behalf of } \party}{\msg{encrypted}{\handle}}
      \If{$\handle\neq\bot$} $\simrepo[\ctmsg] \gets \handle$ \Else{ }\Return \EndIf
      \EndIf
      \State $\handle \gets \simrepo[\ctmsg]$
      \State \SendAndReceive{decrypt}{\funcF,\handle}{\func \text{ on behalf of } \party}{\msg{decrypted}{\feout}}
      \State $((\compkw, \feout), \irsig{} ) \gets \Gatt.\resume(\eid{\funcF}, (\runkw, \bot, \bot, \bot, \bot,$\hl{$\bot$}$,\bot,\feout))$
      \State $\gattsent[\eid{}].\resume \gets \gattsent[\eid{}].\resume \concat (\irsig{},\inp,(\compkw, \feout)))$
      \State $\gattsigned[\irsig{}] \gets (\gattsent[\eid{}].\install[0], \eid{}, \gattsent[\eid{}].\install[1], (\compkw, \feout))$
      \State \Forward $((\compkw, \feout),\irsig{})$ to $\party$
    \EndIf
  \end{receive}
\end{simulator}

\end{document}